\newcommand{\keywords}[1]{\par\addvspace\baselineskip
\noindent\keywordname\enspace\ignorespaces#1}
\def\R{\mathbb{R}}
\def\fin{{<\infty}}
\def\eps{\varepsilon}
\def\E{{\mathbb E}}
\def\I{{\mathcal I}}
\def\C{{\mathcal C}}
\def\X{{\mathcal X}}
\def\Y{{\mathcal Y}}
\def\N{{\mathcal N}}
\def\Z{{\mathcal Z}}
\def\sBer{{\mathsf{Bernoulli}}}
\def\sM{{\mathsf {sENSR}}}
\def\sW{{\mathsf {wENSR}}}
\def\sG{{\mathsf G}}
\def \var {{\mathsf {var}   }}
\def \mmse {{\mathsf {mmse}   }}
\newcommand{\markov}{\mathrel\multimap\joinrel\mathrel-%
\mspace{-9mu}\joinrel\mathrel-}
\begin{document}

\mainmatter  % start of an individual contribution

% first the title is needed
\title{Almost Perfect Privacy for Additive Gaussian Privacy Filters}

% a short form should be given in case it is too long for the running head
\titlerunning{Almost Perfect Privacy in Additive Gaussian Privacy Filter}

% the name(s) of the author(s) follow(s) next
%
% NB: Chinese authors should write their first names(s) in front of
% their surnames. This ensures that the names appear correctly in
% the running heads and the author index.
%
\author{Shahab Asoodeh%
\thanks{This work was supported in part by NSERC of Canada.}%
\and Fady Alajaji\and Tam\'{a}s Linder}
\authorrunning{Almost Perfect Privacy for Additive Gaussian Privacy Filters}
% (feature abused for this document to repeat the title also on left hand pages)

% the affiliations are given next; don't give your e-mail address
% unless you accept that it will be published
\institute{Department of Mathematics and Statistics, Queen's University\\
Jeffery Hall, 48 University Ave., Kingston, ON, Canada\\
\mailsa
\mailsc}

%
% NB: a more complex sample for affiliations and the mapping to the
% corresponding authors can be found in the file "llncs.dem"
% (search for the string "\mainmatter" where a contribution starts).
% "llncs.dem" accompanies the document class "llncs.cls".
%

\toctitle{Lecture Notes in Computer Science}
\tocauthor{Authors' Instructions}
\maketitle

\begin{abstract}
We study the maximal mutual information about a random variable $Y$ (representing non-private information) displayed through an additive Gaussian channel when guaranteeing that only $\eps$ bits of information is leaked about a random variable $X$ (representing private information) that is correlated with $Y$. Denoting this quantity by $g_\eps(X,Y)$, we show that for perfect privacy, i.e., $\eps=0$, one has $g_0(X,Y)=0$ for any pair of absolutely continuous random variables $(X,Y)$ and then derive a second-order approximation for $g_\eps(X,Y)$ for small $\eps$. This approximation is shown to be related to the strong data processing inequality for mutual information under suitable conditions on the joint distribution $P_{XY}$. Next, motivated by an operational interpretation of data privacy, we formulate the privacy-utility tradeoff in the same setup using estimation-theoretic quantities and obtain explicit bounds for this tradeoff  when $\eps$ is sufficiently small using the approximation formula derived for $g_\eps(X,Y)$.
\keywords{Data privacy, rate-privacy function, estimation noise-to-signal ratio, MMSE, additive Gaussian channel, mutual information, maximal correlation.}
\end{abstract}

\section{Introduction}

The ever increasing growth of social networks has brought major challenges in terms of data privacy. This paper focuses on a privacy problem which is relevant for users or designers of social networks: the trade-off between data privacy and customized services performance. On the one hand, users want their private data to remain secret, and on the other hand, they also desire to benefit from customized services that require personal information in order to function properly. In this context, it is reasonable to assume that the user has two kinds of data: private data such as passport numbers, credit cards numbers, etc; and non-private data such as gender, age, etc. In general, private and non-private data are correlated. Thus, it is possible that enough non-private data discloses a non-negligible amount of private data. Therefore, it is necessary to develop techniques to provide/store personal data (user's point of view/designer's point of view) that yield the best customized services performance without compromising privacy. The goal of these techniques is to provide displayed data that will be used by customized services which contains as much non-private data as possible while revealing as little private data as possible. Also, for security reasons, the displayed data has to be produced using only non-private data. In general, this implies that the displayed data should be a randomized version of the non-private data.

To formulate this problem, we need to specify a privacy function and a utility function that respectively measure the amount of private and non-private data \emph{leaked} into the displayed data. The authors of this paper recently suggested in \cite{Asoodeh_Allerton} to use mutual information as the measure of both utility and privacy. Let $X$ and $Y$ denote the private and non-private data, respectively.
The \emph{rate-privacy function} $g^{\mathsf{dis}}_\eps(X,Y)$ for discrete random variables $X$ and $Y$ having finite alphabets $\X$ and $\Y$, respectively is defined for any $\eps\geq 0$ as the privacy-utility tradeoff
 \begin{equation}\label{gEpsilon}
  g^{\mathsf{dis}}_\eps(X,Y):=\max_{\substack{P_{Z|Y}:X\markov Y\markov Z,\\I(X; Z)\leq \eps}} I(Y;Z),
\end{equation}
where the auxiliary random variable $Z$ is the privacy-constrained displayed data and $X\markov Y\markov Z$ denotes that $X$, $Y$, and $Z$ form a Markov chain in this order. The channel $P_{Z|Y}$ is called the \emph{privacy filter}. It is shown in \cite{Asoode_submitted} that $g^{\mathsf{dis}}_\eps(X,Y)$ is in fact a corner point of an
outer bound on the achievable region of the "dependence dilution" coding problem which provides an information-theoretic operational interpretation. It is also shown that if  the channel from $Y$ to $X$ displays certain symmetry properties, then $g^{\mathsf{dis}}_\eps(X,Y)$ can be calculated in closed form.  For instance, if $P_{X|Y}$ is a binary symmetric channel (BSC) and $Y\sim\sBer(0.5)$, then $g^{\mathsf{dis}}_\eps(X,Y)=\frac{\eps}{I(X;Y)}$.

As a more practical and operational notion of privacy, estimation-theoretic formulations of privacy are introduced in \cite{Calmon_bounds_Inference} and \cite{Asoode_MMSE_submitted}.
In particular, Calmon et al.\ \cite{Calmon_bounds_Inference} studied the case where $X=Y$ and defined the utility by $\Pr(\hat{Y}(Z)=Y)$ where $\hat{Y}:\Z\to \Y$ is the Bayes decoding map  satisfying $I(Y; Z)\leq \eps$ for discrete $Y$. Motivated by \cite{Fawaz_Makhdoumi}, which suggested the use of maximal correlation $\rho_m^2(X, Z)$ to measure the privacy level between $X$ and $Z$, the authors in \cite{Asoode_MMSE_submitted} recently generalized this model to arbitrary discrete $X$ and $Y$, with the same utility function except that $Z$ is required to satisfy $\rho_m^2(X,Z)\leq \eps$. It was shown independently in \cite{Asoodeh_Allerton} and \cite{Calmon_fundamental-Limit} that if \emph{perfect privacy} is required, i.e., $Z$ must be statistically independent of $X$, then $Z$ is also independent of $Y$ unless the probability vectors $\{P_{Y|X}(\cdot|x):x\in \X\}$ are linearly dependent (in which case $Y$ is called \emph{weakly independent} of $X$, see \cite[Appendix II]{berger}). Hence, if $Y$ is not weakly independent of $X$, then $g^{\mathsf{dis}}_0(X,Y)=0$. Other formulations for privacy have appeared in \cite{Reed:1973,yamamotoequivocationdistortion,Lalitha_Forensics,Asoodeh_CWIT,t_closeness,Calmon_PhD_thesis}.

The setting where $(X,Y)$ is a pair of absolutely continuous random variables with $\X=\Y=\R$ is studied in \cite{Asoode_submitted} with both utility and privacy being measured by mutual information,  and in \cite{Asoode_MMSE_submitted}, where both utility and privacy are measured in terms of the minimum mean-squared error (MMSE). In both cases, it is assumed that the privacy filter is an additive Gaussian channel with signal-to-noise ratio (SNR) $\gamma\geq 0$, i.e.,
 \begin{equation}\label{Z_Gamma}
  Z=Z_\gamma:=\sqrt{\gamma}Y+N_\sG,
\end{equation}
 where $N_\sG\sim \N(0, 1)$ is independent of $(X,Y)$. In particular, the rate-privacy function \cite{Asoode_submitted} is defined as
\begin{equation}
\label{g_Epsilon_Con}
g_{\eps}(X,Y) := \max\limits_{\begin{smallmatrix}\gamma\geq0,\\I(X;Z_\gamma)\leq\eps\end{smallmatrix}} I(Y;Z_\gamma).
\end{equation}
Letting $\mmse(U|V)$ denote the MMSE of estimating $U$ by observing $V$ and letting $\var$ denote the variance, the estimation-theoretic privacy-utility tradeoff is defined in \cite{Asoode_MMSE_submitted} by the \emph{estimation noise-to-signal ratio} (ENSR):
\begin{equation}
\label{ENSR_Definition}
\sM_{\eps}(X,Y):=\min_{} \frac{\mmse(Y|Z_\gamma)}{\var(Y)},
\end{equation}
where the minimum is taken over all $\gamma\geq 0$ such that $\mmse(f(X)|Z_\gamma)\geq (1-\eps)\var(f(X))$ for any non-constant measurable function $f:\X\to \R$. Unlike $g_\eps(X,Y)$, $\sM_\eps(X,Y)$ has a clear operational interpretation; it is the smallest MMSE associated with  estimating $Y$ given $Z$ from which no non-degenerate function $f$ of $X$ can  be estimated efficiently. This notion is related to \emph{semantic security} \cite{Goldwasser1984270} in cryptography. An encryption mechanism is said to be semantically secure if the adversary's advantage for correctly guessing any function of the private data given an observation of the mechanism's
output (i.e., the ciphertext) is required to be negligible.  As opposed to the discrete case, perfect privacy is achieved if and only if $\gamma=0$, which gives rise to $g_0(X,Y)=0$ (or equivalently $\sM_0(X,Y)=1$) for any absolutely continuous $(X,Y)$.

%It should also be noted that $g^{\mathsf{dis}}_\eps(X;Y)$ has formal similarity with the \emph{information bottleneck} (IB) problem, introduced by Tishby et al. \cite{information_bottleneck}, defined as
%  \begin{equation}\label{Gamma_R}
%    \Gamma_R(X;Y):=\max_{P_{Z|Y}:X\markov Y \markov Z\atop I(Y;Z)\leq R}I(X;Z).
%  \end{equation}
%  The quantity $\Gamma_R(X;Y)$ appears to be useful in several problems in information theory, see e.g., \cite{Witsenhausen_entropyinequalities} and \cite{mrsgerber}.
%  %It is shown in \cite{information_Bottleneck2003} that \eqref{Gamma_R} is the solution to the problem of lossless source coding with one-helper. Furthermore, \eqref{Gamma_R} is closely related to the generalization of Mrs. Gerber's Lemma  for non-binary random variables \cite{Witsenhausen_entropyinequalities}.
%  However, we note that  although both maps $\eps\mapsto g^{\mathsf{dis}}_\eps(X;Y)$ and $R\mapsto \Gamma_R(X;Y)$ are strictly increasing and concave, they are fundamentally different.
%  %  Also, it is easy to see that $$\Gamma'_0(X;Y)=\lim_{R\to 0}\frac{\Gamma_R(X;Y)}{R}=\sup_{P_{Z|Y}:X\markov Y\markov Z}\frac{I(X;Z)}{I(Y;Z)},$$ which establishes a connection between $\Gamma_R(X;Y)$ and strong data processing inequality \cite{anantharam}.

\subsection{Contributions}
In this work, we investigate the "almost" perfect privacy regime, that is, when $\eps>0$ is close to zero and derive a second-order approximation for $g_\eps(X,Y)$ (Corollary 2). We also obtain the first and second derivatives of the mapping $\eps\mapsto g_\eps(X,Y)$ for $\eps\in [0, I(X;Y))$ (Theorem 1). For a pair of Gaussian random variables $(X,Y)$, an expression for $g_\eps(X,Y)$ is derived (Example 1) and it is shown that the optimal filter has SNR equal to $\frac{2^{2\eps}-1}{1-2^{-2(I(X;Y)-\eps)}}$ for all $\eps< I(X;Y)$ and the SNR is infinity if $\eps\geq I(X;Y)$. Functional properties of the map $\eps\mapsto g_\eps(X,Y)$ are obtained (Proposition 1); in particular, it is shown than although the map $\eps\mapsto g^{\mathsf{dis}}_\eps(X,Y)$ is concave \cite{Asoode_submitted}, the map $\eps\mapsto g_\eps(X,Y)$ is neither convex nor concave, and is infinitely differentiable (Corollary 1). Using a recent result on the strong data processing inequality by Anantharam et al.\ \cite{anantharam}, a lower bound is obtained for $g_\eps(X,Y)$. Assuming $P_{Y|X}$ is a convolution with a Gaussian distribution, i.e., $Y=aX+M_\sG$, where $a\neq 0$ and $M_\sG\sim \N(0, \sigma_M^2)$ is independent of $X$, we obtain an inequality relating $\mmse(Y|Z_\gamma, X)$ to $\mmse(Y|Z_\gamma)$ from which a stronger version of Anantharam's data processing inequality is derived for our setup (Theorem 2).

One main result of this paper is to connect $g_\eps(X,Y)$ with $\sM_\eps(X,Y)$ in the almost perfect privacy regime when $X$ is Gaussian (Theorem 4). This connection allows us to translate the approximation obtained for $g_\eps(X,Y)$ to a lower bound for $\sM_\eps(X,Y)$.

%\subsection{Related Work}
% Suppose correlated random variables $X$ and $Y$, defined over \emph{finite} alphabets $\X$ and $\Y$, respectively, represent private and non-private part of data. To formally formulate the information-theoretic data privacy problem, Asoodeh et al.\ \cite{Asoodeh_Allerton} introduced the so-called \emph{rate-privacy function}
% \begin{equation}\label{gEpsilon}
%  g^{\mathsf{dis}}_\eps(X;Y):=\max_{\substack{P_{Z|Y}:X\markov Y\markov Z\\I(X; Z)\leq \eps}} I(Y;Z),
%\end{equation}
%where $X\markov Y\markov Z$ denotes that $X, Y$ and $Z$ form a Markov chain in this order. In fact, given $(X,Y)$, $g^{\mathsf{dis}}_\eps(X;Y)$ seeks the optimal channel $P_{Z|Y}$ such that $Z$ reveals maximal information about $Y$ while it bears at most $\eps$ bits information of $X$. The channel $P_{Z|Y}$ is therefore called \emph{privacy filter}.
%The functional properties of $\eps\mapsto g_\eps(X;Y)$ are obtained in \cite{Asoode_submitted} for \emph{discrete} $X$ and $Y$ and it is also shown that it is a linear function of $\eps$ if $P_{X|Y}$ satisfies a symmetric condition.
\subsection{Preliminaries}

For a given pair of absolutely continuous random variables $(U, V)$, we interchangeably use $P_{UV}$ to denote the joint probability distribution and also the joint probability density function (pdf). The MMSE of estimating $U$ given $V$ is given by
 \begin{equation*}
% \nonumber to remove numbering (before each equation)
  \mmse(U|V):=\E[\left(U-\E[U|V]\right)^2]=\E[\var(U|V)],
\end{equation*}
where $\var(U|V)=\E[(U-\E[U|V])^2|V]$. Guo et al.\ \cite{MMSE_Guo} proved the following so-called I-MMSE formula relating the input-output mutual information of the additive Gaussian channel $Z_\gamma=\sqrt{\gamma}Y+N_\sG$, where $N_\sG\sim\N(0,1)$ is independent of $X$, with the MMSE of the input given the output:
  \begin{equation}\label{I_MMSE}
    \frac{\text{d}}{\text{d}\gamma}I(Y;Z_\gamma)=\frac{1}{2}\mmse(Y|Z_\gamma).
  \end{equation}
 Since $X$, $Y$ and $Z_\gamma$ form the Markov chain $X\markov Y\markov Z_\gamma$, it follows that $I(X; Z_\gamma)=I(Y; Z_\gamma)-I(Y; Z_\gamma|X)$ and hence two applications of \eqref{I_MMSE} yields \cite[Theorem 10]{MMSE_Guo}
     \begin{equation}\label{I_MMSE2}
    \frac{\text{d}}{\text{d}\gamma}I(X; Z_\gamma)=\frac{1}{2}\left[\mmse(Y|Z_\gamma)-\mmse(Y|Z_\gamma, X)\right].
  \end{equation}
The second derivative of  $I(Y; Z_\gamma)$ and $I(X; Z_\gamma)$ are also known via the formula \cite{MMSE_Guo_Wu}
  \begin{equation}\label{Second_Derivative_MMSE}
    \frac{\text{d}}{\text{d}\gamma}\mmse(Y|Z_\gamma, X)=-\E[\var^2(Y|Z_\gamma, X)].
  \end{equation}

R\'{e}nyi \cite{Renyi-dependence-measure} defined the \emph{one-sided maximal correlation between $U$ and $V$} (see also \cite[Definition 7.4]{Calmon_PhD_thesis}) as
\begin{equation}\label{Eta_Definition}
  \eta^2_V(U):=\sup_g\rho^2(U, g(V))=\frac{\var(\E[U|V])}{\var(U)},
\end{equation} where $\rho(\cdot, \cdot)$ is the (Pearson) correlation coefficient,  the supremum is taken over all measurable functions $g$, and the equality follows from the Cauchy-Schwarz inequality. The law of total variance implies that \begin{equation}\label{Law-Total_variance_ETA_MMSE}
  \mmse(U|V)=\var(U)(1-\eta^2_V(U)).
\end{equation}
In an attempt of symmetrizing $\eta^2_V(U)$, R\'{e}nyi \cite{Renyi-dependence-measure} (see also \cite{gebelien} and \cite{sarmanov}) defined the \emph{maximal correlation}  as
\begin{equation}\label{Definition_MaximalCorrelation}
  \rho_m^2(U, V)=\sup_{f, g}\rho^2(f(U), g(V)).
\end{equation}
Comparing \eqref{Eta_Definition} with \eqref{Definition_MaximalCorrelation} reveals that
  \begin{equation}\label{Eta_Gaussian_inequality}
    \rho^2(X,Y)\leq\eta^2_X(Y)\leq \rho_m^2(X,Y).
  \end{equation}
  Clearly, unlike maximal correlation, $\eta_X(Y)$ is asymmetric, i.e., in general $\eta_X(Y)\neq \eta_Y(X)$, and hence according to R\'{e}nyi's postulates \cite{Renyi-dependence-measure}, it is not a "proper" measure of dependence. However, it turns out to be an appropriate measure of separability between private and non-private information in the almost perfect privacy regime (see Corollary~\ref{corollary_Approximation_g_eps}).
  On the other hand, maximal correlation satisfies all the R\'{e}nyi's postulates \cite{Renyi-dependence-measure}. In particular, it is symmetric and for  jointly Gaussian random variables $U$ and $V$ with correlation coefficient $\rho$, we have $\rho_m^2(U, V)=\rho^2$.

\section{Rate-Privacy Function for Additive Privacy Filters}
Consider a pair of absolutely continuous random variables $(X,Y)$ distributed according to $P_{XY}$. Let $X$ and $Y$ represent the \emph{private data} and the \emph{non-private data}, respectively. We think of $X$ as having fixed distribution $P_X$ and $Y$ being generated by the channel $P_{Y|X}$, predefined by nature. Now consider the setting where Alice observes $Y$ and wishes to describe it as accurately as possible to Bob in order to get a utility from him.  Due to the correlation between $Y$ and the private data $X$, Alice needs to provide Bob a noisy version $Z$ of $Y$, such that $Z$ cannot reveal more than $\eps$ bits of information about $X$. In fact, we assume that $Z$ is obtained via the privacy filter, $Z=Z_\gamma$ defined in \eqref{Z_Gamma}. The aim is to pick $\gamma\geq 0$ such that $Z_\gamma$ preserves the maximum amount of the information about $Y$ while satisfying the privacy constraint. The rate-privacy function $g_{\eps}(X,Y)$, defined in \eqref{g_Epsilon_Con}, quantifies the tradeoff between these conflicting goals \cite{Asoode_submitted}. %\cite{Asoode_submitted}
%  \begin{equation}
%\label{g_Epsilon_Con2}
%g_{\eps}(X;Y) := \sup\limits_{\begin{smallmatrix}\gamma\geq0,\\I(X;Z_\gamma)\leq\eps\end{smallmatrix}} I(Y;Z_\gamma),
%\end{equation}
Note that since $I(Y; Z_\gamma)=I(Y; Y+\frac{1}{\sqrt{\gamma}}N_\sG)$, we can interpret $\frac{1}{\gamma}$ as the noise variance. %Therefore, $g_\eps(X;Y)$ corresponds to the smallest noise variance such that the privacy guarantee is satisfied.
Due to the data processing inequality, one can restrict $\eps$ to the interval $[0, I(X;Y))$ in the definition of $g_\eps(X,Y)$ and consequently for any $\eps\geq I(X;Y)$ the optimal noise variance must be zero and hence $g_\eps(X,Y)=\infty$. The case where the displayed data is required to carry no information at all about $X$, i.e., where $\eps=0$, is often called \emph{perfect privacy}.

The maps $\gamma\mapsto I(Y; Z_\gamma)$ and $\gamma\mapsto I(X; Z_\gamma)$ are strictly increasing over $[0, \infty)$ \cite[Lemmas 16, 17]{Asoode_submitted} and hence there exists a unique $\gamma_\eps\in [0, \infty)$ such that $I(X; Z_{\gamma_\eps})=\eps$ and $g_\eps(X,Y)=I(Y; Z_{\gamma_\eps})$. This observation yields the following proposition.
\begin{proposition}\label{Proposition:Properties}
  For absolutely continuous random variables $(X,Y)$, we have
  \begin{itemize}
    \item[1.] The map $\eps\mapsto \gamma_\eps$ is strictly increasing and continuous, and it satisfies $\gamma_0=0$ and $\gamma_{I(X;Y)}=\infty$.
    \item[2.] The map $\eps\mapsto g_\eps(X,Y)$ is non-negative, increasing and, continuous on $[0, I(X;Y))$, and it satisfies $g_0(X,Y)=0$ and $g_{I(X;Y)}(X,Y)=\infty$.
    \item[3.] Let $D(Y)$ denote the "non-Gaussianness" of $Y$, defined as  $D(Y):=D(P_Y||P_{Y_\sG})$ (here $D(\cdot||\cdot)$ is the Kullback-Leibler divergence) with $Y_\sG$ being a Gaussian random variable having the same mean and variance as $Y$. Then we have $$\frac{1}{2}\log\left(1+\gamma_\eps2^{-2D(Y)}\var(Y)\right)\leq g_\eps(X,Y)\leq \frac{1}{2}\log(1+\gamma_\eps\var(Y)).$$
  \end{itemize}
\end{proposition}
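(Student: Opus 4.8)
Parts 1 and 2 follow from a monotone-inverse and composition argument built on the facts quoted just before the statement. The plan is to regard $\gamma \mapsto I(X; Z_\gamma)$ as a map on $[0,\infty)$: by \cite[Lemmas 16, 17]{Asoode_submitted} it is strictly increasing, and it is continuous since it is differentiable via \eqref{I_MMSE2}, with $I(X; Z_0) = 0$ because $Z_0 = N_\sG$ is independent of $X$, and with $\lim_{\gamma\to\infty} I(X; Z_\gamma) = I(X;Y)$. Hence it is a homeomorphism of $[0,\infty)$ onto $[0, I(X;Y))$, and $\eps \mapsto \gamma_\eps$ is precisely its inverse; the inverse of a strictly increasing continuous bijection is again strictly increasing and continuous, which gives Part 1 together with the boundary values $\gamma_0 = 0$ and $\gamma_{I(X;Y)} = \infty$. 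For Part 2 I would write $g_\eps(X,Y) = I(Y; Z_{\gamma_\eps})$ and observe that it is the composition of the two strictly increasing continuous maps $\eps\mapsto\gamma_\eps$ and $\gamma\mapsto I(Y;Z_\gamma)$, hence strictly increasing and continuous; non-negativity is immediate from $I \geq 0$, the value $g_0(X,Y) = I(Y; N_\sG) = 0$ is immediate, and $g_{I(X;Y)}(X,Y) = \infty$ follows from $\gamma_{I(X;Y)} = \infty$ together with $\lim_{\gamma\to\infty} I(Y; Z_\gamma) = \infty$.

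The two bounds in Part 3 are the substantive content, and both reduce to differential-entropy estimates for $Z_\gamma = \sqrt{\gamma}Y + N_\sG$ evaluated at $\gamma = \gamma_\eps$. Since $Z_\gamma$ given $Y$ is just a translate of $N_\sG$, translation invariance of differential entropy gives $I(Y; Z_\gamma) = h(Z_\gamma) - h(Z_\gamma \mid Y) = h(Z_\gamma) - h(N_\sG)$, so both bounds amount to controlling $h(Z_\gamma)$. For the upper bound I would invoke the maximum-entropy property of the Gaussian: since $\var(Z_\gamma) = 1 + \gamma\var(Y)$, we have $h(Z_\gamma) \leq \frac{1}{2}\log\bigl(2\pi e(1+\gamma\var(Y))\bigr)$, and subtracting $h(N_\sG) = \frac{1}{2}\log(2\pi e)$ yields $I(Y; Z_\gamma)\leq \frac{1}{2}\log(1+\gamma\var(Y))$.

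For the lower bound I would use the entropy power inequality $2^{2h(U+V)} \geq 2^{2h(U)} + 2^{2h(V)}$ applied to the independent summands $U = \sqrt{\gamma}Y$ and $V = N_\sG$. Using the scaling relation $2^{2h(\sqrt{\gamma}Y)} = \gamma\, 2^{2h(Y)}$, the identity $2^{2h(N_\sG)} = 2\pi e$, and the defining relation of the non-Gaussianness $h(Y) = \frac{1}{2}\log(2\pi e\var(Y)) - D(Y)$ (equivalently $2^{2h(Y)} = 2\pi e\,\var(Y)\,2^{-2D(Y)}$), I obtain $2^{2h(Z_\gamma)} \geq 2\pi e\,\bigl(1+\gamma\var(Y)2^{-2D(Y)}\bigr)$; dividing by $2^{2h(N_\sG)} = 2\pi e$ and taking logarithms gives $I(Y; Z_\gamma)\geq \frac{1}{2}\log\bigl(1+\gamma\var(Y)2^{-2D(Y)}\bigr)$. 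Setting $\gamma = \gamma_\eps$ in both displays completes Part 3.

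I expect the only genuine care (rather than difficulty) to lie in the boundary limits of Parts 1 and 2 -- namely justifying $\lim_{\gamma\to\infty} I(X; Z_\gamma) = I(X;Y)$ and the divergence $\lim_{\gamma\to\infty} I(Y; Z_\gamma) = \infty$, the latter because the effective channel $Y \mapsto Y + \frac{1}{\sqrt{\gamma}}N_\sG$ has vanishing noise variance, so its mutual information with the absolutely continuous input $Y$ diverges -- and in keeping the logarithm base consistent so that the nats-based I-MMSE identities align with the bits-based expressions carrying $2^{-2D(Y)}$. The degenerate case $D(Y) = \infty$ (equivalently $h(Y) = -\infty$) needs only a one-line remark: the claimed lower bound then reads $\frac{1}{2}\log 1 = 0$, which holds trivially since $I \geq 0$.
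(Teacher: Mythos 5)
Your proposal is correct and follows essentially the same route as the paper: parts 1 and 2 via strict monotonicity and continuity of $\gamma\mapsto I(X;Z_\gamma)$ and $\gamma\mapsto I(Y;Z_\gamma)$, the upper bound in part 3 from the Gaussian maximum-entropy (capacity) property, and the lower bound from the entropy power inequality combined with $D(Y)=h(Y_\sG)-h(Y)$. Your added care about the boundary limits and the degenerate case $D(Y)=\infty$ is sound but does not change the argument.
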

\begin{proof}
Parts 1 and 2 can be proved directly from continuity and strict monotonicity of the maps $\gamma\mapsto I(Y; Z_\gamma)$ and $\gamma\mapsto I(X; Z_\gamma)$.
%\item[1.] It is clear from the data processing inequality and \cite[Lemmas 15, 17]{Asoode_submitted} that $\gamma\mapsto I(X;Z_\gamma)$ is strictly increasing and continuous with range $[0, I(X;Y))$ and thus its inverse, $\eps\mapsto \gamma_\eps$, is strictly increasing and continuous as well on $[0, I(X;Y))$. The facts that $I(X; Z_0)=0$ and $\lim_{\gamma\to \infty}I(X; Z_\gamma)=I(X;Y)$ imply that $\gamma_0=0$ and $\gamma_{I(X;Y)}=\infty$.
%\item[2.] The nonnegativity and monotonicity of $g_\eps(X;Y)$ follow directly from definition. Again the data processing inequality and \cite[Lemmas 15, 16]{Asoode_submitted} imply that $\gamma\mapsto I(Y; Z_\gamma)$ is strictly increasing and continuous and hence $g_\eps(X;Y)=I(Y; Z_\eps)$ from which part 1, the continuity of $\eps\mapsto g_\eps(X;Y)$ follows. We have
%    $$g_0(X;Y)=I(Y; Z_{\gamma_0})=I(Y; Z_0)=0,$$ and
%    $$g_{I(X;Y)}(X;Y)=I(Y; Z_{\gamma_{I(X;Y)}})=\lim_{\gamma\to \infty} I(Y; Z_\gamma)=I(Y; Y)=\infty.$$
 The upper bound in part 3 is a direct consequence of the fact that a Gaussian input maximizes the mutual information between input and output of an additive Gaussian channel. The lower bound follows from the entropy power inequality \cite[Theorem 17.7.3]{Cover_Book} which states that
    $2^{2h(Z_\gamma)}\geq \gamma2^{2h(Y)}+2\pi e$ and hence
    $$g_\eps(X,Y)=I(Y; Z_{\gamma_\eps})\leq \frac{1}{2}\log\left(\gamma_\eps2^{2h(Y)}+2\pi e\right)-\frac{1}{2}\log(2\pi e),$$
from which and the fact that $D(Y)=h(Y_\sG)-h(Y)$, the lower bound immediately follows. \qed
\end{proof}
 In light of Proposition~\ref{Proposition:Properties}, it is clear that, unless $X$ and $Y$ are independent, $Z_\gamma$ is independent of $X$ if and only if $\gamma=0$, which implies $g_0(X,Y)=0$. As mentioned in the introduction, this is in contrast with the discrete rate-privacy function \eqref{gEpsilon}, where $g^{\mathsf{dis}}_0(X,Y)$ may be positive (for example, when $Y$ is an erased version of $X$, see \cite[Lemma 12]{Asoode_submitted}).
\begin{example}\label{Example_Gaussian}
Let $(X_\sG,Y_\sG)$ be a pair of Gaussian random variables with zero mean and correlation coefficient $\rho$. Then  $Z_\gamma$ is also a Gaussian random variable with variance $\gamma\var(Y_\sG)+1$. Without loss of generality assume that $Y_\sG$ has  unit variance. Then
$$I(X_\sG; Z_\gamma)=\frac{1}{2}\log\left(\frac{\gamma+1}{\gamma-\gamma\rho^2+1}\right),$$
and hence for any $\eps\in [0, I(X_\sG;Y_\sG))$ the equation $I(X_\sG;Z_\gamma)=\eps$  has the unique solution
$$\gamma_\eps=\frac{1-2^{-2\eps}}{2^{-2\eps}+\rho^2-1}.$$ Thus, we obtain \begin{eqnarray}
  g_\eps(X_\sG,Y_\sG)&=&\frac{1}{2}\log(1+\gamma_\eps)=\frac{1}{2}\log\left(\frac{\rho^2}{2^{-2\eps}+\rho^2-1}\right)\nonumber\\
  &=&\frac{1}{2}\log\left(1+\frac{2^{2\eps}-1}{1-2^{-2(I(X_\sG;Y_\sG)-\eps)}}\right).\label{Gaussian} \end{eqnarray}
The graph of $g_\eps(X_\sG,Y_\sG)$ is depicted in Fig.~\ref{Fig:   Gaussian} for $\rho=0.5$ and $\rho=0.8$. It is worth noting that $g_\eps(X_\sG,Y_\sG)$ is related to the Gaussian rate-distortion function $R_{\sG}(D)$ \cite{Cover_Book}. In fact,  $g_\eps(X_\sG, Y_\sG)=R_\sG(D_\eps)$ for $\eps\leq I(X_\sG; Y_\sG)$ where
$$D_\eps=\frac{2^{-2\eps}-2^{-2I(X_\sG; Y_\sG)}}{\rho^2},$$ is the mean squared distortion incurred in reconstructing $Y$ given the displayed data $Z_\gamma$.
%corresponds to the capacity of channel $\tilde{Z}=\tilde{Y}+\tilde{N}$ with Gaussian input $\tilde{Y}\sim \N(0, 2^{2\eps}-1)$ and noise $\tilde{N}\sim \N(0, 1-2^{-2(I(X;Y)-\eps)})$.
\end{example}
\begin{figure}[t]
  % Requires \usepackage{graphicx}
  \centering
  \includegraphics[width=10.5cm, height=8cm]{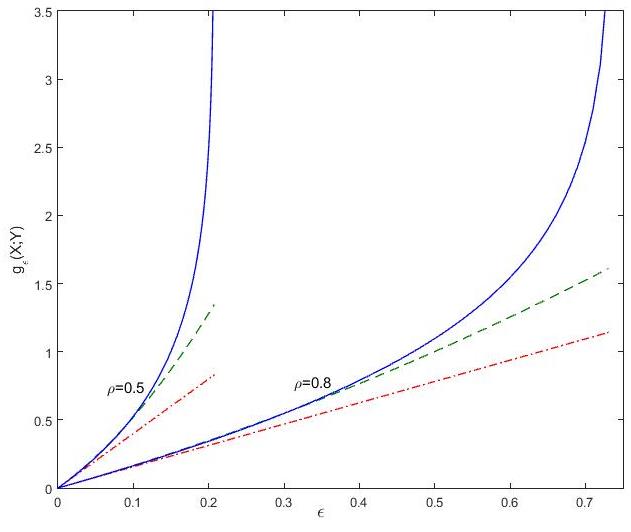}\\
  \caption{\small{The rate-privacy function for a pair of Gaussian $(X_\sG,Y_\sG)$, given by \eqref{Gaussian}, for $\rho=0.5$ and $\rho=0.8$. The first and second-order approximations are also shown in red and green, respectively. }}\label{Fig: Gaussian}
\end{figure}

The next result provides the first derivative $g'_\eps(X,Y)$ of the function $\eps\mapsto g_\eps(X,Y)$ at any $\eps<I(X;Y)$.
\begin{theorem}\label{Theorem:Derivative_G}
   For any absolutely continuous random variables $(X,Y)$, we have
   $$g'_{\eps}(X,Y)=\frac{\mmse(Y|Z_{\gamma_{\eps}})}{\mmse(Y|Z_{\gamma_{\eps}})-\mmse(Y|Z_{\gamma_{\eps}}, X)}.$$
    \end{theorem}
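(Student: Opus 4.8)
The plan is to compute $g'_\eps(X,Y)$ via the chain rule, exploiting the fact (established in the text preceding the theorem) that $g_\eps(X,Y)=I(Y;Z_{\gamma_\eps})$ where $\gamma_\eps$ is the unique SNR satisfying the privacy constraint $I(X;Z_{\gamma_\eps})=\eps$ with equality. The key observation is that this gives us a convenient parametrization: both $I(Y;Z_\gamma)$ and $I(X;Z_\gamma)$ are known smooth functions of $\gamma$ through the I-MMSE formulas \eqref{I_MMSE} and \eqref{I_MMSE2}, and $\eps\mapsto\gamma_\eps$ is the inverse of the strictly increasing map $\gamma\mapsto I(X;Z_\gamma)$. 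So I would write $g_\eps = (I(Y;Z_\cdot)\circ\gamma_\cdot)(\eps)$ and differentiate through the composition.

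Concretely, first I would differentiate the defining identity $I(X;Z_{\gamma_\eps})=\eps$ with respect to $\eps$. By the chain rule and \eqref{I_MMSE2}, this yields
\begin{equation*}
\frac{\mathrm{d}\gamma_\eps}{\mathrm{d}\eps}\cdot\frac{1}{2}\bigl[\mmse(Y|Z_{\gamma_\eps})-\mmse(Y|Z_{\gamma_\eps},X)\bigr]=1,
\end{equation*}
which solves for $\frac{\mathrm{d}\gamma_\eps}{\mathrm{d}\eps}$. Next I would differentiate $g_\eps(X,Y)=I(Y;Z_{\gamma_\eps})$ with respect to $\eps$; by the chain rule and \eqref{I_MMSE},
\begin{equation*}
g'_\eps(X,Y)=\frac{\mathrm{d}\gamma_\eps}{\mathrm{d}\eps}\cdot\frac{1}{2}\mmse(Y|Z_{\gamma_\eps}).
\end{equation*}
Substituting the expression for $\frac{\mathrm{d}\gamma_\eps}{\mathrm{d}\eps}$ obtained in the previous step cancels the factor of $\frac{1}{2}$ and produces exactly the claimed ratio.

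The delicate point — and the step I expect to require the most care — is justifying the differentiability of $\eps\mapsto\gamma_\eps$ so that the chain rule applies. Proposition~\ref{Proposition:Properties} only asserts that this map is continuous and strictly increasing; to differentiate it I would invoke the inverse function theorem, which requires that $\gamma\mapsto I(X;Z_\gamma)$ be continuously differentiable with nonvanishing derivative on $(0,\infty)$. The former follows from the smoothness of Gaussian-channel mutual information in the SNR, and the latter amounts to showing $\mmse(Y|Z_\gamma)-\mmse(Y|Z_\gamma,X)>0$ for $\gamma>0$, i.e., that the denominator in the theorem is strictly positive. This strict inequality is precisely the statement that conditioning on $X$ strictly reduces the MMSE of estimating $Y$ from $Z_\gamma$, which holds whenever $X$ and $Y$ are not independent (equivalently, $\gamma\mapsto I(X;Z_\gamma)$ is strictly increasing, as already noted in the text). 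I would therefore phrase the proof as valid on the open interval $\eps\in(0,I(X;Y))$ where $\gamma_\eps\in(0,\infty)$, and note that the stated formula extends to $\eps=0$ by the continuity established earlier.
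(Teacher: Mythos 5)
Your proof is correct and takes essentially the same route as the paper's: differentiate the identity $I(X;Z_{\gamma_\eps})=\eps$ to extract $\frac{\mathrm{d}\gamma_\eps}{\mathrm{d}\eps}$ via \eqref{I_MMSE2}, then apply the chain rule to $g_\eps(X,Y)=I(Y;Z_{\gamma_\eps})$ using \eqref{I_MMSE} and substitute. Your added justification of the differentiability of $\eps\mapsto\gamma_\eps$ (inverse function theorem plus strict positivity of $\mmse(Y|Z_\gamma)-\mmse(Y|Z_\gamma,X)$, equivalently strict monotonicity of $\gamma\mapsto I(X;Z_\gamma)$) is a point the paper leaves implicit, but it does not change the argument.
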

 \begin{proof}
   Since $g_{\eps}(X,Y)=I(Y; Z_{\gamma_\eps})$, we have
   \begin{eqnarray}
   \frac{\text{d}}{\text{d}\eps}g_{\eps}(X,Y)&=& \left[\frac{\text{d}}{\text{d}\gamma}I(Y; Z_{\gamma})\right]_{\gamma=\gamma_\eps}\frac{\text{d}}{\text{d}\eps}\gamma_\eps\nonumber\\
   &\stackrel{(a)}{=}&\frac{1}{2}\mmse(Y|Z_{\gamma_\eps})\frac{\text{d}}{\text{d}\eps}\gamma_\eps,\label{derivative}
   \end{eqnarray}
   where $(a)$ follows from \eqref{I_MMSE}. In order to calculate $\frac{\text{d}}{\text{d}\eps}\gamma_\eps$, notice that $\eps=I(X; Z_{\gamma_\eps})$ and hence taking the derivative of both sides of this equation with respect to $\eps$ yields
   $$1=\left[\frac{\text{d}}{\text{d}\gamma}I(X; Z_{\gamma})\right]_{\gamma=\gamma_\eps}\frac{\text{d}}{\text{d}\eps}\gamma_\eps,$$ and hence
   \begin{eqnarray}
   % \nonumber % Remove numbering (before each equation)
     \frac{\text{d}}{\text{d}\eps}\gamma_\eps &=& \frac{1}{ \left[\frac{\text{d}}{\text{d}\gamma}I(X; Z_{\gamma})\right]_{\gamma=\gamma_\eps}}  \nonumber\\
      &\stackrel{(a)}{=}& \frac{2}{\mmse(Y|Z_{\gamma_\eps})-\mmse(Y|Z_{\gamma_\eps}, X)}, \label{derivative_Gamma}
   \end{eqnarray}
where $(a)$ follows from \eqref{I_MMSE2}. The result then follows by plugging \eqref{derivative_Gamma} into \eqref{derivative}. \qed

   %On the other hand, we know from \eqref{I_MMSE2} that for sufficiently small $\delta>0$, we have
%   \begin{eqnarray*}
%   % \nonumber % Remove numbering (before each equation)
%     &&\hspace{-1.2cm}I(X; Z_{\gamma_\eps+\delta})-I(X; Z_{\gamma_\eps}) = \frac{\delta}{2}\left[\mmse(Y|Z_{\gamma_\eps})-\mmse(Y|Z_{\gamma_\eps}, X)\right]+ o(\delta),
%   \end{eqnarray*}
%and hence
%\begin{equation}\label{derivative2}
%  I(X; Z_{\gamma_\eps+\delta})=\eps+\frac{\delta}{2}\left[\mmse(Y|Z_{\gamma_\eps})-\mmse(Y|Z_{\gamma_\eps}, X)\right]+ o(\delta).
%\end{equation}
%   By continuity of $\gamma\mapsto I(X; Z_\gamma)$, it follows that there exists $\zeta>0$ such that $I(X; Z_{\gamma_\eps+\delta})=\eps+\zeta$, where according to \eqref{derivative2}
%   \begin{equation}\label{derivative3}
%     \zeta =\frac{\delta}{2}\left[\mmse(Y|Z_{\gamma_\eps})-\mmse(Y|Z_{\gamma_\eps}, X)\right]+ o(\delta).
%   \end{equation}
%   It then follows that
%   \begin{eqnarray*}
%   % \nonumber % Remove numbering (before each equation)
%     \gamma_{\eps+\zeta} &=& \gamma_\eps+\delta \\
%      &=& \gamma_\eps+\frac{2\zeta}{\mmse(Y|Z_{\gamma_\eps})-\mmse(Y|Z_{\gamma_\eps}, X)}+o(\zeta),
%   \end{eqnarray*}
% and consequently,
% \begin{equation}\label{derivative_Gamma}
%   \frac{\text{d}}{\text{d}\eps}\gamma_\eps=\frac{2}{\mmse(Y|Z_{\gamma_\eps})-\mmse(Y|Z_{\gamma_\eps}, X)}.
% \end{equation} Plugging \eqref{derivative_Gamma} into \eqref{derivative} and invoking \eqref{I_MMSE}, we conclude that
%   $$\frac{\text{d}}{\text{d}\eps}g_{\eps}(X;Y)=\frac{\mmse(Y|Z_{\gamma_\eps})}{\mmse(Y|Z_{\gamma_\eps})-\mmse(Y|Z_{\gamma_\eps}, X)}.$$
 \end{proof}
As a simple illustration of Theorem~\ref{Theorem:Derivative_G}, consider jointly Gaussian $X_\sG$ and $Y_\sG$ whose rate-privacy function is computed in Example~\ref{Example_Gaussian}. In particular, \eqref{Gaussian} gives
\begin{equation}\label{Derivative_Gaussian}
  g'_\eps(X_\sG,Y_\sG)=\frac{2^{-2\eps}}{2^{-2\eps}+\rho^2-1}.
\end{equation}
On the other hand, since $X_\sG=\sqrt{\alpha}Y_\sG+N_1$ where $\alpha=\rho^2\var(X)$, $N_1\sim \N(0, \sigma_N^2)$ is independent of $Y_\sG$, and $\sigma_N^2=(1-\rho^2)\var(X)$, one can conclude from \cite[Proposition 3]{MMSE_Guo} that $$\mmse(Y_\sG|Z_\gamma, X_\sG)=\mmse\left(Y_\sG|Z_\gamma, \frac{1}{\sigma^2_N}X_\sG\right)=\mmse(Y_\sG|Z_{\gamma+a}),$$ where $a=\frac{\rho^2}{1-\rho^2}$. Recalling that $\mmse(Y_\sG|Z_\gamma)=\frac{1}{1+\gamma}$, we  obtain
\begin{eqnarray*}
% \nonumber % Remove numbering (before each equation)
  \frac{\mmse(Y_\sG|Z_\gamma)}{\mmse(Y_\sG|Z_\gamma)-\mmse(Y_\sG|Z_{\gamma+a})}&=&\frac{1+(1-\rho^2)\gamma_\eps}{\rho^2}\\ &=&\frac{2^{-2\eps}}{2^{-2\eps}+\rho^2-1},
  \end{eqnarray*}
  which equals \eqref{Derivative_Gaussian}.

In light of Theorem~\ref{Theorem:Derivative_G}, we can now show that the map $\eps\mapsto g_\eps(X,Y)$ is in fact infinitely differentiable over $(0, I(X; Y))$.
\begin{corollary}\label{corollary_Smoothness}
  For a pair of absolutely continuous $(X,Y)$, the map $\eps\mapsto g_\eps(X,Y)$ is infinitely differentiable at any $\eps\in(0, I(X;Y))$. Moreover, if all the moments of $Y$ is finite, then $\eps\mapsto g_\eps(X,Y)$ is infinitely right differentiable at $\eps=0$.
\end{corollary}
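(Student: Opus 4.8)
The plan is to realize $g_\eps(X,Y)$ as the composition $\eps\mapsto\gamma_\eps\mapsto I(Y;Z_{\gamma_\eps})$ and to invoke the smooth inverse function theorem, after first establishing that both maps $\gamma\mapsto I(X;Z_\gamma)$ and $\gamma\mapsto I(Y;Z_\gamma)$ are $C^\infty$ on $(0,\infty)$. By the I-MMSE identities \eqref{I_MMSE} and \eqref{I_MMSE2}, the derivatives of these two maps are $\frac{1}{2}\mmse(Y|Z_\gamma)$ and $\frac{1}{2}[\mmse(Y|Z_\gamma)-\mmse(Y|Z_\gamma,X)]$, so it suffices to show that $\gamma\mapsto\mmse(Y|Z_\gamma)$ and $\gamma\mapsto\mmse(Y|Z_\gamma,X)$ are smooth. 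For $\gamma>0$ this is a consequence of the Gaussian smoothing built into $Z_\gamma=\sqrt{\gamma}Y+N_\sG$: the posterior density of $Y$ given $Z_\gamma=z$ is proportional to $P_Y(y)e^{-(z-\sqrt{\gamma}y)^2/2}$, whose Gaussian factor $e^{-\gamma y^2/2}$ forces all posterior moments to be finite and locally bounded in $z$, so one may differentiate under the integral arbitrarily often. Each successive derivative of the MMSE is then expressible through higher conditional central moments such as $\E[\var^2(Y|Z_\gamma,X)]$ appearing in \eqref{Second_Derivative_MMSE}, all of which are finite for $\gamma>0$.

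Next I would use the fact, recalled before Proposition~\ref{Proposition:Properties}, that $\gamma\mapsto I(X;Z_\gamma)$ is strictly increasing on $[0,\infty)$, so its derivative $\frac{1}{2}[\mmse(Y|Z_\gamma)-\mmse(Y|Z_\gamma,X)]$ is strictly positive on $(0,\infty)$. Since $\gamma_\eps$ is defined implicitly by $\eps=I(X;Z_{\gamma_\eps})$, it is precisely the inverse of this strictly increasing, smooth, nonvanishing-derivative function; the smooth inverse function theorem then gives that $\eps\mapsto\gamma_\eps$ is $C^\infty$ on $(0,I(X;Y))$. Composing with the smooth map $\gamma\mapsto I(Y;Z_\gamma)$ yields that $g_\eps(X,Y)=I(Y;Z_{\gamma_\eps})$ is $C^\infty$ on $(0,I(X;Y))$, which proves the interior statement; the explicit first derivative of Theorem~\ref{Theorem:Derivative_G} is recovered as the base case of this chain-rule computation.

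For right-differentiability at $\eps=0$, the same composition argument applies provided the two MMSE maps extend smoothly from the right to $\gamma=0$ and the derivative of $I(X;Z_\gamma)$ is nonzero there. At $\gamma=0$ the channel output $Z_0=N_\sG$ is independent of $(X,Y)$, so $\mmse(Y|Z_0)=\var(Y)$ and, by \eqref{Law-Total_variance_ETA_MMSE}, $\mmse(Y|Z_0,X)=\mmse(Y|X)=\var(Y)(1-\eta^2_X(Y))$, whence $\frac{\mathrm{d}}{\mathrm{d}\gamma}I(X;Z_\gamma)\big|_{\gamma=0}=\frac{1}{2}\var(Y)\eta^2_X(Y)>0$ whenever $X$ and $Y$ are dependent; the inverse function theorem then applies one-sidedly. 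The main obstacle is therefore the boundary smoothness of the MMSE maps at $\gamma=0$, where the Gaussian regularization is switched off: the $\gamma\downarrow0$ limits of the conditional-moment expressions governing the higher derivatives become the \emph{unconditional} central moments $\E[(Y-\E Y)^j]$, and these are all finite exactly under the hypothesis that every moment of $Y$ is finite. Verifying that the derivative formulas pass to the limit $\gamma\downarrow0$ — by dominating them uniformly near $\gamma=0$ with integrable functions built from the moments of $Y$ — is the most delicate part, and it is precisely here that the moment assumption is needed.
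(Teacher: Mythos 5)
Your overall architecture is the same as the paper's: combine the derivative formula of Theorem~\ref{Theorem:Derivative_G} with smoothness of $\gamma\mapsto\mmse(Y|Z_\gamma)$ and $\gamma\mapsto\mmse(Y|Z_\gamma,X)$, then apply the inverse function theorem to $\eps=I(X;Z_{\gamma_\eps})$ and compose. The genuine gap is at the one analytic step that carries all the weight. The paper disposes of it by citing \cite[Proposition~7]{MMSE_Guo_Wu}, which states precisely that $\gamma\mapsto\mmse(Y|Z_\gamma)$ is infinitely differentiable for $\gamma>0$ and infinitely \emph{right} differentiable at $\gamma=0$ when all moments of $Y$ are finite. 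You instead try to reprove this: for $\gamma>0$ your ``differentiate under the integral'' sketch is plausible but never exhibits the dominating functions that justify interchanging $\mathrm{d}/\mathrm{d}\gamma$ with the expectation over $Z_\gamma$, and at $\gamma=0$ you explicitly concede that verifying the passage to the limit ``is the most delicate part'' and leave it undone. But that boundary case is exactly what the second sentence of the corollary asserts and exactly where the moment hypothesis enters, so as written the proof of that half of the statement is missing rather than merely compressed. You also omit the short but necessary observation (which the paper makes) that $\E[Y^k]\fin$ for all $k$ forces $\E[|Y|^k\mid X=x]\fin$ for $P_X$-almost every $x$: at $\gamma=0$ the limits relevant to $\gamma\mapsto\mmse(Y|Z_\gamma,X)$ are conditional central moments given $X$, not the unconditional moments of $Y$ that you invoke, so this step is needed to transfer the boundary result to the conditional MMSE map.

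A secondary inaccuracy: you claim $\tfrac{\mathrm{d}}{\mathrm{d}\gamma}I(X;Z_\gamma)\big|_{\gamma=0}=\tfrac12\var(Y)\eta_X^2(Y)>0$ ``whenever $X$ and $Y$ are dependent.'' Dependence does not imply $\eta_X^2(Y)>0$; one has $\eta_X^2(Y)=0$ whenever $\E[Y|X]$ is a.s.\ constant, which is compatible with strong dependence (e.g.\ $Y=Wf(X)+N_\sG$ with $W=\pm1$ an independent symmetric sign and $f$ nonconstant). In such cases the denominator in Theorem~\ref{Theorem:Derivative_G} vanishes at $\eps=0$ and the one-sided inverse function theorem does not apply, so the positivity you need is $\eta_X^2(Y)>0$, not mere dependence. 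Relatedly, for interior $\gamma$ you infer a strictly positive derivative of $\gamma\mapsto I(X;Z_\gamma)$ from its strict monotonicity; a strictly increasing smooth function can still have isolated critical points, so you should argue directly that $\mmse(Y|Z_\gamma,X)<\mmse(Y|Z_\gamma)$ for all $\gamma$ in the relevant range. If you repair these points---most economically by citing the Guo--Wu smoothness result as the paper does, or else by actually supplying the dominated-convergence bounds near $\gamma=0$---the remaining composition and inverse-function argument goes through and coincides with the paper's proof.
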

\begin{proof}
  It is shown in \cite[Proposition 7]{MMSE_Guo_Wu} that $\gamma\mapsto \mmse(Y|Z_\gamma)$ is infinitely differentiable at any $\gamma>0$ and  infinitely right differentiable at $\gamma=0$ if all the moments of $Y$ are finite. Thus the corollary follows from Theorem~\ref{Theorem:Derivative_G} noting that since $\E[Y^k]\fin$ for all $k$, we also have $\E[Y^k|X=x]\fin$ for almost all $x$ (except for $x$ in a set of zero $P_X$-measure). It therefore follows that $\gamma\mapsto \mmse(Y|Z_\gamma, X)$ is also infinitely right differentiable at $\gamma=0$.\qed
\end{proof}
We remark that using \eqref{Second_Derivative_MMSE} and Theorem~\ref{Theorem:Derivative_G}, one can easily calculate the second derivative as
\begin{eqnarray}
% \nonumber % Remove numbering (before each equation)
  g''_{\eps}(X,Y) &=& \frac{\text{d}^2}{\text{d}\eps^2}g_\eps(X,Y)\label{Second_Derivative_g_EPS}\\
   &=&\frac{2\left(\mmse(Y|Z_{\gamma_\eps}, X)\E[\var^2(Y|Z_{\gamma_\eps})]-\mmse(Y|Z_{\gamma_\eps})\E[\var^2(Y|Z_{\gamma_\eps}, X)]\right)}{\left[\mmse(Y|Z_{\gamma_{\eps}})-\mmse(Y|Z_{\gamma_\eps}, X)\right]^3}.\nonumber
\end{eqnarray}

The following corollary, which is an immediate consequence of Theorem~\ref{Theorem:Derivative_G}, provides a second-order approximation for $g_\eps(X,Y)$ as $\eps\downarrow 0$ and thus an approximation to the the rate-privacy function in the almost perfect privacy regime.
\begin{corollary}\label{corollary_Approximation_g_eps}
For a given pair of absolutely continuous random variables $(X,Y)$, we have as $\eps\downarrow 0$,
$$g_\eps(X,Y)=\frac{\eps}{\eta^2_X(Y)}+\Delta(X,Y)\eps^2+o(\eps^2),$$
where
\begin{equation}\label{Second_Derivative_Delta}
  \Delta(X,Y)=\frac{1}{\eta_X^4(Y)}\left(\frac{\var^2(Y)-\E[\var^2(Y|X)]}{\var^2(Y)\eta_X^2(Y)}-1\right),
\end{equation} and $\eta^2_X(Y)$ is the one-sided maximal correlation between $X$ and $Y$ defined in \eqref{Eta_Definition}.
\end{corollary}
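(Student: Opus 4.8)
The plan is to read off the expansion as the second-order (right-sided) Taylor expansion of the map $\eps\mapsto g_\eps(X,Y)$ about $\eps=0$. Proposition~\ref{Proposition:Properties} already supplies the constant term $g_0(X,Y)=0$, and Corollary~\ref{corollary_Smoothness} guarantees that the map is (infinitely) right differentiable at $\eps=0$ whenever the relevant moments of $Y$ are finite. Granting this smoothness, Taylor's theorem with Peano remainder gives
$$g_\eps(X,Y)=g'_0(X,Y)\,\eps+\tfrac{1}{2}g''_0(X,Y)\,\eps^2+o(\eps^2),$$
so the whole task reduces to identifying $g'_0(X,Y)$ with $1/\eta_X^2(Y)$ and $\tfrac{1}{2}g''_0(X,Y)$ with $\Delta(X,Y)$.

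For the first-order coefficient I would evaluate the formula of Theorem~\ref{Theorem:Derivative_G} at $\eps=0$. Since $\gamma_0=0$ by part~1 of Proposition~\ref{Proposition:Properties}, the output $Z_{\gamma_0}=N_\sG$ is pure noise independent of $(X,Y)$, whence $\mmse(Y|Z_{\gamma_0})=\var(Y)$ and $\mmse(Y|Z_{\gamma_0},X)=\mmse(Y|X)$. Substituting $\mmse(Y|X)=\var(Y)(1-\eta_X^2(Y))$ from \eqref{Law-Total_variance_ETA_MMSE} into Theorem~\ref{Theorem:Derivative_G} collapses the ratio to $\var(Y)/(\var(Y)\eta_X^2(Y))=1/\eta_X^2(Y)$, matching the claimed leading term.

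For the second-order coefficient I would specialize \eqref{Second_Derivative_g_EPS} at $\eps=0$. Because $Z_{\gamma_0}$ is independent noise, the conditional variances freeze: $\var(Y|Z_{\gamma_0})=\var(Y)$ and $\var(Y|Z_{\gamma_0},X)=\var(Y|X)$, so that $\E[\var^2(Y|Z_{\gamma_0})]=\var^2(Y)$ and $\E[\var^2(Y|Z_{\gamma_0},X)]=\E[\var^2(Y|X)]$. Plugging these, together with the two MMSE values computed above, into \eqref{Second_Derivative_g_EPS}, the denominator becomes $\bigl(\var(Y)\eta_X^2(Y)\bigr)^3$, and after simplification one obtains $\tfrac{1}{2}g''_0(X,Y)=\Delta(X,Y)$ with $\Delta$ exactly as in \eqref{Second_Derivative_Delta}.

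These computations are entirely routine; the one point that genuinely requires care is justifying the $o(\eps^2)$ remainder, i.e., that $g_\eps$ is twice right differentiable at the endpoint $\eps=0$. This is where the finiteness of the moments of $Y$ enters through Corollary~\ref{corollary_Smoothness}, and the same condition is implicitly needed for the quantity $\E[\var^2(Y|X)]$ appearing in $\Delta(X,Y)$ to be finite. Assuming these moment conditions, the substitution $\gamma_\eps=0$ into the first- and second-derivative formulas is the heart of the argument, with the rest being algebraic simplification.
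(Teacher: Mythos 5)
Your proposal is correct and follows essentially the same route as the paper: invoke the smoothness from Corollary~\ref{corollary_Smoothness} to justify the second-order Taylor expansion at $\eps=0$, then evaluate the derivative formulas of Theorem~\ref{Theorem:Derivative_G} and \eqref{Second_Derivative_g_EPS} at $\gamma_0=0$, where $Z_{\gamma_0}=N_\sG$ is independent noise so that $\mmse(Y|Z_{\gamma_0})=\var(Y)$ and $\mmse(Y|Z_{\gamma_0},X)=\var(Y)(1-\eta_X^2(Y))$. The paper states these two evaluations without showing the algebra; you have supplied exactly those details (correctly), and your remark that the finite-moment hypothesis of Corollary~\ref{corollary_Smoothness} is implicitly needed is a fair observation about a condition the corollary's statement leaves tacit.
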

\begin{proof}
  According to Corollary~\ref{corollary_Smoothness}, we can use the second-order Taylor expansion to approximate $g_\eps(X,Y)$ around $\eps=0$, resulting in
  $$g_\eps(X,Y)=\eps g'_0(X,Y)+\frac{\eps^2}{2}g''_0(X,Y)+o(\eps^2).$$
From Theorem~\ref{Theorem:Derivative_G} and \eqref{Second_Derivative_g_EPS} we have $g'_0(X,Y)=\frac{1}{\eta^2_X(Y)}$ and $g''_0(X,Y)=2\Delta(X,Y)$, respectively, from which the corollary follows. \qed
  \end{proof}
 Since $\rho_m^2(X_\sG, Y_\sG)=\rho^2$ for jointly Gaussian $X_\sG$ and  $Y_\sG$ with correlation coefficient $\rho$, \eqref{Eta_Gaussian_inequality} implies that $\eta^2_{X_\sG}(Y_\sG)=\rho^2$ and  $\Delta(X_\sG,Y_\sG)=\frac{1-\rho^2}{\rho^4}$, and therefore Corollary~\ref{corollary_Approximation_g_eps} implies that
for small $\eps> 0$,
$$g_\eps(X_\sG,Y_\sG)=\frac{1}{\rho^2}\eps+\frac{1-\rho^2}{\rho^4}\eps^2+o(\eps^2).$$
This second-order approximation as well as the  first-order approximation are illustrated in Fig.~\ref{Fig:   Gaussian} for $\rho=0.5$ and $\rho=0.8$.

 Polyanskiy and Wu \cite{Dissipation_Polyanskiey} have recently generalized the strong data processing inequality of Anantharam et al.\ \cite{anantharam} for the case of continuous random variables $X$ and $Y$ with joint distribution $P_{XY}$. Their result states that  \begin{equation}\label{strong_Data_Anantharam}
  \sup_{X\markov Y\markov U,\atop  0<I(U;Y)\fin }\frac{I(X;U)}{I(Y;U)}=S^*(Y,X),
\end{equation}
where $$S^*(Y,X):=\sup_{Q_Y, \atop 0<D(Q_Y||P_Y)\fin} \frac{D(Q_X||P_X)}{D(Q_Y||P_Y)},$$
 where $P_X$ and $P_Y$ are the marginals of $P_{XY}$ and  $Q_X(\cdot)=\int P_{X|Y}(\cdot|y)Q_Y(\text{d}y)$. In addition, it is shown in \cite{Dissipation_Polyanskiey} that the supremum in \eqref{strong_Data_Anantharam} is achieved by a binary $U$.
 Replacing $U$ with $Z_\gamma$, we can conclude from \eqref{strong_Data_Anantharam} that
 $$\frac{I(X; Z_\gamma)}{I(Y; Z_\gamma)}\leq S^*(Y,X),$$ for any $\gamma\geq 0$. Letting  $\gamma=\gamma_\eps$, the above yields that
 \begin{equation}\label{LowerBound_S*}
    g_\eps(X,Y)\geq \frac{\eps}{S^*(Y,X)}.
 \end{equation}
Clearly, this bound may be expected to be tight only for small $\eps>0$ since $g_\eps(X,Y)\to \infty$ as $\eps\to I(X;Y)$, as shown in Proposition~\ref{Proposition:Properties}. Note that Theorem~\ref{Theorem:Derivative_G} implies $\lim_{\eps\downarrow 0}\frac{g_\eps(X,Y)}{\eps}=\frac{1}{\eta_X^2(Y)}$. On the other hand, it can be easily shown that $\eta_X^2(Y) \leq S^*(Y,X)$, with equality when $X$ and $Y$ are jointly Gaussian and hence the inequality \eqref{LowerBound_S*} becomes tight for small $\eps$ and jointly Gaussian $X$ and $Y$.

The bound in \eqref{LowerBound_S*} would be significantly improved if we could show that $g_\eps(X,Y)\geq g_\eps(X_\sG, Y_\sG)$, where $X_\sG$ and $Y_\sG$ are jointly Gaussian having the same means, variances, and correlation coefficient as $(X, Y)$. This is because in that case we could write
\begin{equation}\label{Lower_Bound_WRONG}
  g_\eps(X, Y)\geq g_\eps(X_\sG, Y_\sG)\stackrel{(a)}{\geq} \frac{\eps}{\eta^2_{X_\sG}(Y_\sG)}=\frac{\eps}{\rho^2(X_\sG,Y_\sG)}=\frac{\eps}{\rho^2(X,Y)}\stackrel{(b)}{\geq} \frac{\eps}{\eta_X^2(Y)},
\end{equation}  where $(a)$ and $(b)$ follow from \eqref{Gaussian} and  \eqref{Eta_Gaussian_inequality}, respectively. However, as shown in Appendix~\ref{Appendix A}, the inequality $g_\eps(X,Y)\geq g_\eps(X_\sG, Y_\sG)$ does not in general hold\footnote{We will see in the next section that this holds in the estimation-theoretic formulation of privacy, i.e., the Gaussian case is the \emph{worst} case when the privacy filter is an additive Gaussian channel and the utility and privacy are measured as $\mmse(Y|Z_\gamma)$ and $\mmse(X|Z_\gamma)$, respectively.}.
It is therefore possible to have $g_\eps(X,Y)<\frac{\eps}{\eta_X^2(Y)}$ for some $0<\eps< I(X;Y)$.  To construct an example, it suffices to construct $P_{XY}$ for which $\eps\mapsto g_\eps(X,Y)$ is locally concave at zero (i.e., $g''_0(X,Y)<0$) and hence its graph lies below the tangent line $\frac{\eps}{\eta_X^2(Y)}$ for some $\eps>0$. Let $Y\sim \N(0, 1)$ and $X=Y\cdot1_{\{Y\in[-1, 1]\}}$. Then it can be readily shown that $\E[\var(Y|X)]<\E[\var^2(Y|X)]$, which implies that $\Delta(X,Y)<0$. Hence, since $g''_0(X,Y)=2\Delta(X,Y)$, we have that $g''(X,Y)<0$. This observation is illustrated in Fig.~\ref{Fig: Truncated_Gaussian}.
\begin{figure}[t]
  % Requires \usepackage{graphicx}
  \centering
  \includegraphics[width=10.5cm, height=8cm]{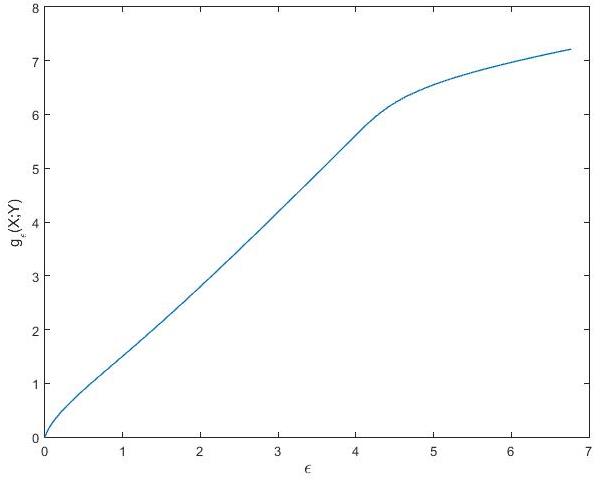}\\
  \caption{\small{The rate-privacy function for $Y\sim \N(0,1)$ and $X=Y\cdot1_{\{Y\in[-1, 1]\}}$. The map $\eps\mapsto g_\eps(X,Y)$ is clearly locally concave at zero. Note that here $I(X;Y)=\infty$ and hence $\eps$ is unbounded.}}\label{Fig: Truncated_Gaussian}
\end{figure}

%%SIMULATION NEEDED%%%%%%%%%%%%%%%%%%%
%%SIMULATION NEEDED%%%%%%%%%%%%%%%%%%%
%%SIMULATION NEEDED%%%%%%%%%%%%%%%%%%%
%%SIMULATION NEEDED%%%%%%%%%%%%%%%%%%%
%\textcolor{red}{We will calculate and draw $g_\eps(X;Y)$ for this example.
%$$I(Y; Z_\gamma)=\frac{1}{2}\log(1+\gamma),$$
%and $$I(X;Z_\gamma)=I(Y; Z_\gamma)-I(Y;Z_\gamma|X)=\frac{1}{2}\log(1+\gamma)-2\Phi(-1)I(Y; Z_\gamma|Y\notin[-1, 1])$$
%}
%%SIMULATION NEEDED%%%%%%%%%%%%%%%%%%%
%%SIMULATION NEEDED%%%%%%%%%%%%%%%%%%%
%%SIMULATION NEEDED%%%%%%%%%%%%%%%%%%%
%%SIMULATION NEEDED%%%%%%%%%%%%%%%%%%%

As remarked earlier, the map $\eps\mapsto g_\eps(X,Y)$ is in general not convex and thus one cannot conclude that $g'_\eps(X,Y)\geq g'_0(X,Y)=\frac{1}{\eta_X^2(Y)}$. However, it can be shown that this implication holds if $P_{XY}$ has more structure. In the next theorem, we assume that $Y$ is a noisy version of $X$ through an additive Gaussian channel.
\begin{theorem}
  For a given $X\sim P_X$ with variance $\sigma_X^2$, and $Y=aX+M_\sG$ with $M_\sG\sim \N(0, \sigma^2_M)$ independent of $X$, we have:
  \begin{itemize}
    \item[1.] If $a^2\sigma_X^2\geq \sigma_M^2$, then $\eps\mapsto g_\eps(X,Y)$ is convex.
    \item[2.] For any $a>0$ and $\eps\in[0, I(X;Y))$, we have
        \begin{equation}\label{Theorem_1}
          g_\eps(X,Y)\geq \frac{\eps}{\eta_X^2(Y)}.
        \end{equation}
        Furthermore, we have
        \begin{equation}\label{Theorem_2}
          \inf_{\gamma\geq 0}\frac{\mmse(Y|Z_\gamma, X)}{\mmse(Y|Z_\gamma)}=1-\eta_X^2(Y),
        \end{equation}
        and
        \begin{equation}\label{Theorem_3}
          \sup_{\gamma> 0}\frac{I(X;Z_\gamma)}{I(Y;Z_\gamma)}=\eta_X^2(Y).
        \end{equation}
    \end{itemize}
\end{theorem}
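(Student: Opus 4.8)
The plan is to reduce everything to a single structural fact. Since $Y=aX+M_\sG$ with $M_\sG\sim\N(0,\sigma_M^2)$ Gaussian, conditioned on $X=x$ the pair $(Y,Z_\gamma)$ is jointly Gaussian (both are affine functions of the independent Gaussians $M_\sG$ and $N_\sG$), and its conditional covariance matrix does not depend on $x$. Hence $\mmse(Y|Z_\gamma,X)=\var(Y|Z_\gamma,X)$ is a \emph{deterministic} constant, and a Schur-complement computation gives $\mmse(Y|Z_\gamma,X)=\frac{\sigma_M^2}{\gamma\sigma_M^2+1}$. I would also record the elementary identities $\E[Y|X]=aX$, so that by \eqref{Eta_Definition} $\eta_X^2(Y)=\frac{a^2\sigma_X^2}{a^2\sigma_X^2+\sigma_M^2}$ and $1-\eta_X^2(Y)=\frac{\sigma_M^2}{\var(Y)}$, consistent with \eqref{Law-Total_variance_ETA_MMSE} since $\mmse(Y|X)=\sigma_M^2$.

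First I would prove \eqref{Theorem_2}. Writing $m(\gamma):=\mmse(Y|Z_\gamma)$, the ratio to be minimized is $\frac{\mmse(Y|Z_\gamma,X)}{m(\gamma)}=\frac{\sigma_M^2}{(\gamma\sigma_M^2+1)\,m(\gamma)}$. The key (and essentially only nontrivial) step is to upper bound $m(\gamma)$ by the linear MMSE of estimating $Y$ from $Z_\gamma=\sqrt{\gamma}Y+N_\sG$, namely $m(\gamma)\le\frac{\var(Y)}{\gamma\var(Y)+1}$. Since $\sigma_M^2\le\var(Y)$, this is in turn at most $\frac{\var(Y)}{\gamma\sigma_M^2+1}$, and substituting yields $\frac{\mmse(Y|Z_\gamma,X)}{m(\gamma)}\ge\frac{\sigma_M^2}{\var(Y)}=1-\eta_X^2(Y)$ for every $\gamma\ge0$. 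Evaluating at $\gamma=0$, where $m(0)=\var(Y)$, the ratio equals exactly $1-\eta_X^2(Y)$, so the infimum is attained there and \eqref{Theorem_2} follows.

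With the pointwise bound $\mmse(Y|Z_\gamma,X)\ge(1-\eta_X^2(Y))\,m(\gamma)$ in hand, the other two claims are short. For \eqref{Theorem_1} I would invoke Theorem~\ref{Theorem:Derivative_G}: $g'_\eps(X,Y)=\frac{m(\gamma_\eps)}{m(\gamma_\eps)-\mmse(Y|Z_{\gamma_\eps},X)}=\left(1-\frac{\mmse(Y|Z_{\gamma_\eps},X)}{m(\gamma_\eps)}\right)^{-1}\ge\frac{1}{\eta_X^2(Y)}$, and then integrate from $0$ to $\eps$ using $g_0(X,Y)=0$. For \eqref{Theorem_3} I would represent both mutual informations through \eqref{I_MMSE} and \eqref{I_MMSE2} as $I(Y;Z_\gamma)=\frac12\int_0^\gamma m(t)\,dt$ and $I(X;Z_\gamma)=\frac12\int_0^\gamma\bigl(m(t)-\mmse(Y|Z_t,X)\bigr)\,dt$, so that $\frac{I(X;Z_\gamma)}{I(Y;Z_\gamma)}=1-\frac{\int_0^\gamma\mmse(Y|Z_t,X)\,dt}{\int_0^\gamma m(t)\,dt}\le\eta_X^2(Y)$ by the same pointwise bound; letting $\gamma\downarrow0$ and using continuity of the integrands at $t=0$ (so the integral ratio tends to $\frac{\sigma_M^2}{\var(Y)}$) shows the bound is approached, giving the supremum $\eta_X^2(Y)$.

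Finally, for the convexity claim of part 1, I would use the closed form \eqref{Second_Derivative_g_EPS} for $g''_\eps$. Because $\var(Y|Z_\gamma,X)$ is the deterministic constant computed above, $\E[\var^2(Y|Z_\gamma,X)]=[\mmse(Y|Z_\gamma,X)]^2$, and the numerator of \eqref{Second_Derivative_g_EPS} collapses to $2\,\mmse(Y|Z_\gamma,X)\bigl(\E[\var^2(Y|Z_\gamma)]-m(\gamma)\,\mmse(Y|Z_\gamma,X)\bigr)$, while the denominator is positive since $a\neq0$ forces $m(\gamma)>\mmse(Y|Z_\gamma,X)$. Thus convexity reduces to the sign of $\E[\var^2(Y|Z_\gamma)]-m(\gamma)\,\mmse(Y|Z_\gamma,X)$, and by Jensen $\E[\var^2(Y|Z_\gamma)]\ge m(\gamma)^2$ together with $m(\gamma)\ge\mmse(Y|Z_\gamma,X)$ make this nonnegative. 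The subtle point, and what I expect to be the main obstacle, is reconciling this with the stated hypothesis $a^2\sigma_X^2\ge\sigma_M^2$: the Jensen route appears to give $g''_\eps\ge0$ for every $\gamma$, so the restriction looks sufficient but not necessary, and I would need to check carefully whether the intended argument instead controls $\E[\var^2(Y|Z_\gamma)]$ through a coarser (e.g.\ Gaussian-comparison) estimate for which the condition $a^2\sigma_X^2\ge\sigma_M^2$, equivalently $\eta_X^2(Y)\ge\tfrac12$, is genuinely required.
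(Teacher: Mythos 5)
Your proof is correct, and for part 2 it follows essentially the same route as the paper: the whole argument rests on the pointwise inequality $\mmse(Y|Z_\gamma,X)\geq(1-\eta_X^2(Y))\,\mmse(Y|Z_\gamma)$, which the paper establishes in \eqref{Additive_Derivative_proof} and from which \eqref{Theorem_1}, \eqref{Theorem_2} and \eqref{Theorem_3} are deduced exactly as you do (derivative formula of Theorem~\ref{Theorem:Derivative_G} plus integration, and the I-MMSE integral representations). The only differences in part 2 are cosmetic but welcome: you compute $\mmse(Y|Z_\gamma,X)=\sigma_M^2/(\gamma\sigma_M^2+1)$ directly from the conditional Gaussian structure (the paper gets the same value via the identity $\mmse(U|\alpha U+V)=\alpha^{-2}\mmse(V|\alpha U+V)$), and you bound $\mmse(Y|Z_\gamma)$ by the linear MMSE $\var(Y)/(\gamma\var(Y)+1)$, which is elementary, whereas the paper invokes the ``Gaussian input is worst case'' theorem of Wu--Verd\'u; the two bounds coincide numerically. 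You are also more careful than the paper about why the infimum in \eqref{Theorem_2} and the supremum in \eqref{Theorem_3} are actually achieved/approached (at $\gamma=0$ and as $\gamma\downarrow0$, respectively).

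For part 1 your argument is genuinely different and, as far as I can check, both stronger and more complete than the paper's. The paper only verifies that $\Delta(X,Y)\geq 0$ under the hypothesis $a^2\sigma_X^2\geq\sigma_M^2$, which establishes $g''_0(X,Y)\geq 0$, i.e.\ local convexity at $\eps=0$ only; your route applies \eqref{Second_Derivative_g_EPS} at \emph{every} $\eps$, uses the fact that $\var(Y|Z_{\gamma},X)$ is deterministic so that $\E[\var^2(Y|Z_\gamma,X)]=\mmse^2(Y|Z_\gamma,X)$, and closes with Jensen, yielding $g''_\eps\geq 0$ throughout $(0,I(X;Y))$. Your suspicion that the hypothesis is sufficient but not necessary is well founded: a direct computation gives $\Delta(X,Y)=\eta_X^{-4}(Y)\,\sigma_M^2/\var(Y)>0$ for any $a\neq0$, and in the jointly Gaussian special case \eqref{Gaussian} is convex for every $\rho$, consistent with your unconditional conclusion. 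So there is no obstacle to reconcile; your Jensen argument simply proves more than the theorem claims (and more than the paper's own proof of part 1 actually delivers).
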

\begin{proof}
The first part follows from a straightforward computation showing that  if $a^2\var(X)\geq \sigma^2_M$, then $\Delta(X,Y)\geq 0$.

To prove the second part, note that for any $\gamma\geq0$ we have
\begin{eqnarray}
% \nonumber % Remove numbering (before each equation)
  \mmse(Y|Z_\gamma) &=& \mmse(aX+M_\sG|a\sqrt{\gamma}X+\sqrt{\gamma}M_\sG+N_\sG)\nonumber\\
   &\stackrel{(a)}{=}& \frac{1}{\gamma}\mmse\left(N_\sG|a\sqrt{\gamma}X+\sqrt{\gamma}M_\sG+N_\sG\right)\nonumber\\
   &\stackrel{(b)}{\leq}& \frac{a^2\var(X)+\sigma_M^2}{1+\gamma(a^2\var(X)+\sigma_M^2)}< \frac{a^2\var(X)+\sigma_M^2}{1+\gamma\sigma_M^2}\nonumber\\
   &\stackrel{(c)}{=}&\frac{1}{\gamma}\left(\frac{a^2\var(X)+\sigma_M^2}{\sigma_M^2}\right)\mmse\left(N_\sG|\sqrt{\gamma}M_\sG+N_\sG\right)\nonumber\\
   &\stackrel{(d)}{=}&\left(\frac{a^2\var(X)+\sigma_M^2}{\sigma_M^2}\right)\mmse(Y|Z_\gamma, X),\label{Additive_Derivative_proof}
\end{eqnarray}
where $(a)$ follows from the fact that $\mmse(U|\alpha U+V)=\frac{1}{\alpha^2}\mmse(V|\alpha U+V)$ for $\alpha\neq 0$, $(b)$ and $(c)$ hold by \cite[Theorem 12]{MMSE_WU_Properties} which states that $\mmse(U|U+V_\sG)\leq \mmse(U_\sG|U_\sG+V_\sG)=\frac{\var(U)\var(V)}{\var(U)+\var(V)}$. Finally, $(d)$ follows from the following chain of equalities
\begin{eqnarray*}
% \nonumber % Remove numbering (before each equation)
  \mmse(Y|Z_\gamma, X) &=& \mmse(aX+M_\sG|a\sqrt{\gamma}X+\sqrt{\gamma}M_\sG+N_\sG, X) \\
   &=& \mmse(M_\sG|\sqrt{\gamma}M_\sG+N_\sG, X)\\
   &\stackrel{(e)}{=}& \mmse(M_\sG|\sqrt{\gamma}M_\sG+N_\sG)\\
   &=&\frac{1}{\gamma}\mmse(N_\sG|\sqrt{\gamma}M_\sG+N_\sG)
\end{eqnarray*}
where $(e)$ holds since $X$ and $M_\sG$ are independent.

We can therefore write
\begin{eqnarray}
% \nonumber % Remove numbering (before each equation)
  g'_\eps(X,Y) &=& \frac{\mmse(Y|Z_{\gamma_\eps})}{\mmse(Y|Z_{\gamma_\eps})-\mmse(Y|Z_{\gamma_\eps}, X)}\nonumber\\
  &\stackrel{(a)}{\geq} & \frac{a^2\var(X)+\sigma_M^2}{a^2\var(X)}\stackrel{(b)}{=}\frac{1}{\eta_X^2(Y)}=g'_0(X,Y), \label{Additive_Derivative}
\end{eqnarray}
where $(a)$ is due to \eqref{Additive_Derivative_proof} and $(b)$ holds since $\var(Y)=a^2\var(X)+\sigma_M^2$ and $\var(\E[Y|X])=a^2\var(X)$. The identity $g_\eps(X,Y)=\int_0^\eps g'_t(X,Y)\text{d}t$, and inequality \eqref{Additive_Derivative} together imply that $g_\eps(X,Y)\geq \frac {\eps}{\eta_X^2(Y)}$ for $\eps\leq I(X;Y)$.

Furthermore, according to Theorem~\ref{Theorem:Derivative_G}, the inequality \eqref{Additive_Derivative} yields \eqref{Theorem_2}.
%\begin{equation}\label{MMSE_Strong_DPI}
%  \inf_{\gamma\geq 0}\frac{\mmse(Y|Z_\gamma, X)}{\mmse(Y|Z_\gamma)}=1-\eta_X^2(Y).
%\end{equation}
 Using the integral representation of mutual information in \eqref{I_MMSE} and \eqref{I_MMSE2},  we can write for any $\gamma\geq 0$
\begin{eqnarray}
% \nonumber % Remove numbering (before each equation)
  I(X; Z_\gamma) &=& \frac{1}{2}\int_0^\gamma\left[\mmse(Y|Z_t)-\mmse(Y|Z_t, X)\right]\text{d}t \nonumber\\
   &\leq & \frac{\eta_X^2(Y)}{2}\int_0^\gamma\mmse(Y|Z_t)\text{d}t=\eta_X^2(Y)I(Y; Z_\gamma),\label{Proof_Theorem_3}
\end{eqnarray} where the inequality is due to \eqref{Theorem_2}. The equality \eqref{Theorem_3} then follows from \eqref{Proof_Theorem_3}.
\qed
\end{proof}

It should be noted that both MMSE and mutual information satisfy the data processing inequality, see, \cite{MMSE_WU_Properties} and \cite{anantharam}, that is, $\mmse(U|V)\leq \mmse(U|W)$, and $I(U; W)\leq I(U; V)$ for $U\markov V\markov W$. Therefore, \eqref{Theorem_2} can be thought of as a strong version of the data processing inequality for MMSE for the trivial Markov chain $Y\markov (Z_\gamma, X)\markov Z_\gamma$. Also, \eqref{Theorem_3} can be viewed as a strong data processing inequality for the mutual information for the Markov chain $X\markov Y\markov Z_\gamma$ which is slightly stronger than \eqref{strong_Data_Anantharam} in  the special case of an additive Gaussian channel as $\eta_X^2(Y)\leq S^*(Y, X)$.

\section{Estimation-Theoretic Formulation}
Consider the same scenario as in the previous section: Alice observes $Y$, which is correlated with the private data $X$ according to a given joint distribution $P_{XY}$, and wishes to transmit a random variable $Z$ to Bob to receive a utility from him.  An \emph{operational} measure of privacy is proposed in \cite{Asoode_MMSE_submitted} where Alice generates the displayed data $Z$ via a privacy filter $P_{Z|Y}$ such that Bob cannot efficiently estimate any non-trivial function of $X$ given $Z$. As before, her goal is to maximize the utility (or equivalently minimize the cost) between $Y$ and the displayed data $Z$. The next definition formalizes this privacy guarantee. We call a function $f$ of random variable $X$ \emph{non-degenerate} if $f(X)$ is not almost everywhere constant with respect to the probability measure $P_X$. Also, we assume throughout this section that $X$ and $Y$ have finite second moments.
\begin{definition}\label{Def:strong_estimation_privacy}
Given a pair of jointly absolutely continuous random variables $(X,Y)$ with joint distribution $P_{XY}$ and $0\leq\eps\leq 1$, we say $Z$ satisfies \emph{$\eps$-strong estimation privacy}, if there exists a channel $P_{Z|Y}$ that induces a joint distribution $P_X\times P_{Z|X}$, via the Markov condition $X\markov Y\markov Z$, satisfying
\begin{equation}\label{Eq:strong_estimation_privacy}
  \mmse(f(X)|Z)\geq (1-\eps)\var(f(X)), ~~~\text{or equivalently,}~~~\eta^2_Z(f(X))\leq \eps,
\end{equation}
for any non-degenerate Borel function $f$. Similarly, $Z$ is said to satisfy  \emph{$\eps$-weak estimation privacy}, if \eqref{Eq:strong_estimation_privacy} is satisfied only for the identity function $f(x)=x$.
\end{definition}
It is shown in \cite{Asoode_MMSE_submitted} that $\eps$-strong estimation privacy is equivalently characterized by the requirement $\rho_m^2(X, Z)\leq \eps$. In other words, $\mmse(f(X)|Z)\geq (1-\eps)\var(f(X))$ for any non-degenerate Borel function $f$ if and only if $\rho_m^2(X,Z)\leq \eps$. Let the utility that Alice receives from Bob be measured by $\frac{\var(Y)}{\mmse(Y|Z)}$, which she aims to maximize. For mathematical convenience, we define the \emph{cost} that Alice suffers by describing $Z$ in lieu of $Y$ as the estimation noise-to-signal ratio (ENSR), $\frac{\mmse(Y|Z)}{\var(Y)}$, and hence Alice equivalently aims to minimize the ENSR. Focusing on additive Gaussian privacy filter $Z=Z_\gamma$, we can formalize the privacy-utility tradeoff as
$$\sM_{\eps}(X,Y):=\inf_{\gamma\in \C_{\eps}(P_{XY})} \frac{\mmse(Y|Z_\gamma)}{\var(Y)}=   1-\sup_{\gamma\in \C_{\eps}(P_{XY})}\eta^2_{Z_{\gamma}}(Y),$$
where $\C_{\eps}(P_{XY})$ is the set of parameters $\gamma$ corresponding to $\eps$-strong privacy, i.e.,
$$\C_{\eps}(P_{XY}):=\{\gamma\geq 0:\rho_m^2(X, Z_{\gamma})\leq \eps\}.$$
Similarly,
$$\sW_{\eps}(X,Y):=1-\sup_{\gamma\in \partial\C_{\eps}(P_{XY})}\eta^2_{Z_{\gamma}}(Y),$$
%$$\sW_{\eps}^f(X;Y):=\inf_{\gamma\in \partial\C_{\eps}(P_{XY})}\frac{\mmse(Y|Z_{\gamma})}{\var(Y)},$$
where $$\partial\C_{\eps}(P_{XY}):=\{\gamma\geq 0:\eta^2_{Z_{\gamma}}(X)\leq\eps\}.$$

Note that both the maximal correlation and the one-sided maximal correlation satisfy the data processing inequality, that is, $\rho_m^2(X,Z_\gamma)\leq \rho_m^2(Y, Z_\gamma)$ and $\eta^2_{Z_\gamma}(X)\leq \eta_{Y}(X)$. Therefore, in the definition of $\sM_\eps(X,Y)$ and $\sW_\eps(X,Y)$, we can restrict $\eps$ as $0\leq\eps\leq \rho_m^2(X,Y)$ and $0\leq\eps\leq \eta_Y^2(X)$, respectively.
\begin{example}\label{Example2}
Let $X_{\sG}$ and $Y_{\sG}$ be jointly Gaussian with correlation coefficient $\rho$. Without loss of generality assume  that $\E[X_{\sG}]=\E[Y_{\sG}]=0$. Since  $\rho^2_m(X_{\sG},Z_{\gamma})=\rho^2(X_{\sG}, Z_{\gamma})$, we have $$\rho_m^2(X_{\sG}, Z_{\gamma})=\rho^2\frac{\gamma\var(Y_\sG)}{1+\gamma\var(Y_\sG)},$$ which implies that the mapping  $\gamma\mapsto \rho_m^2(X_{\sG}, Z_{\gamma})$ is strictly increasing. Also, the equation $\rho_m^2(X_{\sG}, Z_{\gamma})=\eps$ for $0\leq\eps\leq\rho^2_m(X_{\sG},Y_{\sG})=\rho^2$ has a unique solution
$$\gamma_{\eps}:=\frac{\eps}{\var(Y_\sG)(\rho^2-\eps)},$$ and $\rho_m^2(X, Z_\gamma)\leq \eps$ for any $\gamma\leq \gamma_{\eps}$. On the other hand, $\mmse(Y_{\sG}|Z_{\gamma})=\frac{\var(Y_\sG)}{1+\gamma\var(Y_\sG)}$, which shows that the map $\gamma\mapsto \mmse(Y_{\sG}|Z_{\gamma})$ is strictly decreasing. Hence,
\begin{equation}\label{M_eps_gaussian}
  \sM_{\eps}(X_{\sG},Y_{\sG})=\frac{\mmse(Y_{\sG}|Z_{\gamma_{\eps}})}{\var(Y_{\sG})}=1-\frac{\eps}{\rho^2}.
\end{equation}
Clearly for jointly Gaussian $X_\sG$ and  $Y_\sG$ we have $\eta^2_{Z_{\gamma}}(X_{\sG})=\rho_m^2(X_{\sG},Z_{\gamma})=\eps$, for any $\gamma\geq 0$
and consequently $\C_\eps(P_{X_\sG Y_\sG})=\partial\C_\eps(P_{X_\sG Y_\sG})$, that is, for $0\leq \eps\leq \rho^2$,
\begin{equation}\label{Eq:SM_SW_Gaussian}
  \sM_{\eps}(X_{\sG},Y_{\sG})=\sW_{\eps}(X_{\sG},Y_{\sG})=1-\frac{\eps}{\rho^2}.
\end{equation}
\end{example}

Unlike $g_\eps(X, Y)$, the quantity $\sM_\eps(X,Y)$ is maximized among all pairs of random variables $(X,Y)$ with identical means, variances and correlation coefficient when $X$ and $Y$ are jointly Gaussian. Thus, Example~\ref{Example2} yields a sharp upper-bound for $\sM_\eps(X,Y)$. This is stated in the following theorem.
\begin{theorem}[\cite{Asoode_MMSE_submitted}]\label{Theorem_mmse_UpperBound}
For any given jointly absolutely continuous $(X,Y)$, we have for $0\leq \eps\leq \rho_m^2(X,Y)$, $$\sW_{\eps}(X,Y)\leq\sM_{\eps}(X,Y)\leq \sM_{\eps}(X_\sG,Y_\sG)= 1-\frac{\eps}{\rho_m^2(X,Y)},$$ where $(X_\sG, Y_\sG)$ is a pair of Gaussian random variables with the same means, variances, and correlation coefficient as $(X,Y)$.
\end{theorem}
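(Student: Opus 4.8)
The plan is to establish the chain in three independent pieces: the left inequality $\sW_\eps\leq\sM_\eps$, the closed form for $\sM_\eps(X_\sG,Y_\sG)$, and the central comparison $\sM_\eps(X,Y)\leq\sM_\eps(X_\sG,Y_\sG)$. The first two are short. For the left inequality, applying \eqref{Eta_Gaussian_inequality} to the pair $(X,Z_\gamma)$ gives $\eta^2_{Z_\gamma}(X)\leq\rho_m^2(X,Z_\gamma)$, so any $\gamma$ with $\rho_m^2(X,Z_\gamma)\leq\eps$ automatically satisfies $\eta^2_{Z_\gamma}(X)\leq\eps$; hence $\C_\eps(P_{XY})\subseteq\partial\C_\eps(P_{XY})$. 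Enlarging the feasible set can only raise the supremum of $\eta^2_{Z_\gamma}(Y)$, and subtracting from $1$ reverses the inequality, giving $\sW_\eps(X,Y)\leq\sM_\eps(X,Y)$. The value $\sM_\eps(X_\sG,Y_\sG)=1-\eps/\rho_m^2(X,Y)$ is then read directly from Example~\ref{Example2}, since $\rho_m^2$ equals the squared correlation coefficient for jointly Gaussian pairs.

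The central inequality is the substantial part, and I would begin by collapsing the optimization to a single filter. For $\gamma_1<\gamma_2$ one may obtain a scalar multiple of $Z_{\gamma_1}$ from $Z_{\gamma_2}$ by adding independent Gaussian noise, so $X\markov Z_{\gamma_2}\markov Z_{\gamma_1}$ and the data processing inequality for maximal correlation makes $\gamma\mapsto\rho_m^2(X,Z_\gamma)$ nondecreasing; likewise $\gamma\mapsto\eta^2_{Z_\gamma}(Y)$ is nondecreasing by \eqref{Law-Total_variance_ETA_MMSE} and the monotonicity of the MMSE. With continuity, the feasible set is an interval $[0,\gamma^*]$ with $\rho_m^2(X,Z_{\gamma^*})=\eps$, and the infimum defining $\sM_\eps$ is attained at $\gamma^*$, so that $\sM_\eps(X,Y)=\mmse(Y|Z_{\gamma^*})/\var(Y)=1-\eta^2_{Z_{\gamma^*}}(Y)$.

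With this reduction, the target bound becomes equivalent to the single pointwise estimate
\begin{equation*}
  \rho_m^2(X,Z_\gamma)\leq\rho_m^2(X,Y)\,\eta^2_{Z_\gamma}(Y),\qquad\gamma\geq0,
\end{equation*}
evaluated at $\gamma=\gamma^*$: since $\rho_m^2(X,Z_{\gamma^*})=\eps$, this estimate yields $\eta^2_{Z_{\gamma^*}}(Y)\geq\eps/\rho_m^2(X,Y)$, hence $\mmse(Y|Z_{\gamma^*})/\var(Y)\leq1-\eps/\rho_m^2(X,Y)$, which is exactly $\sM_\eps(X,Y)\leq\sM_\eps(X_\sG,Y_\sG)$.

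The hard part will be this sharpened tensorization. Writing $\rho_m^2(X,Z_\gamma)=\sup_f\var(\E[f(X)|Z_\gamma])/\var(f(X))$ and using $\E[f(X)|Z_\gamma]=\E[\psi_f(Y)|Z_\gamma]$ with $\psi_f(Y):=\E[f(X)|Y]$, the ratio factors as $\eta^2_Y(f(X))\cdot\eta^2_{Z_\gamma}(\psi_f(Y))$. Bounding the two factors separately by $\rho_m^2(X,Y)$ and $\rho_m^2(Y,Z_\gamma)$ only recovers the ordinary tensorization $\rho_m^2(X,Z_\gamma)\leq\rho_m^2(X,Y)\rho_m^2(Y,Z_\gamma)$, which is too weak: for a non-Gaussian $Y$ the quantity $\eta^2_{Z_\gamma}(\psi_f(Y))$ can exceed $\eta^2_{Z_\gamma}(Y)$. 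The real content is that the two factors cannot both be extremal, i.e.\ a function $f$ whose smoothed version $\psi_f$ is predicted unusually well from the Gaussian-channel output must have $\eta^2_Y(f(X))$ correspondingly below $\rho_m^2(X,Y)$. I expect to establish this trade-off from the contraction (Ornstein--Uhlenbeck) structure of the Gaussian smoothing $Y\mapsto Z_\gamma$ together with the MMSE extremality of Gaussian inputs $\mmse(U|U+V_\sG)\leq\mmse(U_\sG|U_\sG+V_\sG)$ from \cite{MMSE_WU_Properties}, the same ingredient already used at step \eqref{Additive_Derivative_proof}; the joint absolute continuity of $(X,Y)$ is precisely what excludes the degenerate case $\psi_f=\mathrm{id}$ in which the estimate would fail.
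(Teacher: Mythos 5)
The paper does not actually prove this theorem: it is imported verbatim from \cite{Asoode_MMSE_submitted}, so there is no in-paper argument to compare yours against. Judged on its own terms, your first two pieces are fine. The inclusion $\C_\eps(P_{XY})\subseteq\partial\C_\eps(P_{XY})$, obtained from $\eta^2_{Z_\gamma}(X)\le\rho_m^2(X,Z_\gamma)$, does give $\sW_\eps(X,Y)\le\sM_\eps(X,Y)$, and Example~\ref{Example2} gives the Gaussian value. Note, however, that Example~\ref{Example2} yields $1-\eps/\rho^2(X,Y)$, because the comparison pair matches the \emph{correlation coefficient} of $(X,Y)$; this equals the theorem's $1-\eps/\rho_m^2(X,Y)$ only when $\rho^2(X,Y)=\rho_m^2(X,Y)$. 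Your Part 2 silently passes over this discrepancy, which sits in the statement itself and should at least be flagged.

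The genuine gap is the central inequality. Your reduction to the pointwise estimate $\rho_m^2(X,Z_\gamma)\le\rho_m^2(X,Y)\,\eta^2_{Z_\gamma}(Y)$ is logically coherent (granting continuity and range-exhaustion of $\gamma\mapsto\rho_m^2(X,Z_\gamma)$, which you assume without proof and which is itself delicate for a supremum over test functions), but that estimate carries the entire content of the theorem and you never prove it --- the final paragraph only lists ingredients you ``expect'' to use. Worse, the estimate is very doubtful. By your own factorization it amounts to the claim that $\sup_h\eta^2_{Z_\gamma}(h(Y))=\eta^2_{Z_\gamma}(Y)$, i.e., that the maximal correlation of the additive Gaussian channel is always attained by the identity input function, which is a Gaussian-input phenomenon. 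Take $Y=A+cB$ with $A,B$ independent symmetric Bernoulli and $c$ small (mollified to be absolutely continuous), and $X=Y+\delta W$ with $\delta\ll c$: then $\rho_m^2(X,Y)\to1$ and $\rho_m^2(X,Z_\gamma)\to\rho_m^2(Y,Z_\gamma)\ge\eta^2_{Z_\gamma}(A)=\var(\E[A|Z_\gamma])$, whereas $\eta^2_{Z_\gamma}(Y)=\bigl(\var(\E[A|Z_\gamma])+O(c^2)\bigr)/(1+c^2)$, and at high SNR the $O(c^2)$ cross terms do not compensate for the $1/(1+c^2)$ deflation. So joint absolute continuity does not rescue the inequality, contrary to your closing remark, and the middle inequality of the theorem is not established by your argument.
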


Next, we turn our attention to the approximation of $\sM_\eps(X,Y)$ in the almost perfect privacy regime. Unfortunately, there is no known approximation for $\rho_m^2(X, Z_\gamma)$ and $\mmse(X|Z_\gamma)$ around $\gamma=0$. Nevertheless, we can use the first-order approximation of $g_\eps(X,Y)$ to derive an approximation for $\sM_\eps(X,Y)$ around $\eps=0$. The next theorem shows this approximation for the special case where $P_{Y|X}$ is an additive noise channel.
\begin{theorem}\label{theorem_G_ENSR_ALMOST}
If $X\sim \N(b, \sigma_X^2)$ and $Y=aX+M$, where $a, b\in \R^+$, and $M$ is a noise random variable having a density, then for sufficiently small $\eps$
\begin{equation}\label{MMSE_Info_privacy_Final}
\sM_\eps(X_\sG,Y)\geq 2^{-D(Y)}2^{-2g_{\eps+o(\eps)}(X_\sG, Y)}.
\end{equation}
\end{theorem}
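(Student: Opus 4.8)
The plan is to bound $\sM_\eps(X_\sG,Y)$ below by combining a maximum-entropy estimate for the MMSE with the definition of $g$, translating the maximal-correlation privacy budget of $\sM_\eps$ into the mutual-information budget of $g$. First I would identify the optimizing SNR. Since $\gamma\mapsto\mmse(Y|Z_\gamma)$ is strictly decreasing and $\gamma\mapsto\rho_m^2(X,Z_\gamma)$ is nondecreasing and continuous --- the latter because for $\gamma_1<\gamma_2$ one has $Z_{\gamma_1}=\sqrt{\gamma_1/\gamma_2}\,Z_{\gamma_2}+N'$ with $N'$ an independent Gaussian, so $(X,Y)\markov Z_{\gamma_2}\markov Z_{\gamma_1}$ and the data processing inequality for maximal correlation applies --- the feasible set $\C_\eps(P_{X_\sG Y})$ is an interval $[0,\gamma^*]$ and the infimum defining $\sM_\eps$ is attained at its right endpoint. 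Hence $\sM_\eps(X_\sG,Y)=\mmse(Y|Z_{\gamma^*})/\var(Y)$ with $\rho_m^2(X,Z_{\gamma^*})=\eps$.

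Next I would lower-bound this MMSE by a maximum-entropy argument. Because the Gaussian maximizes differential entropy under a variance constraint, Jensen's inequality gives $h(Y\mid Z_{\gamma^*})\le\tfrac12\log_2\!\left(2\pi e\,\mmse(Y|Z_{\gamma^*})\right)$. Writing $h(Y\mid Z_{\gamma^*})=h(Y)-I(Y;Z_{\gamma^*})$ and using $2^{2h(Y)}=2\pi e\,\var(Y)\,2^{-2D(Y)}$ --- the identity already underlying the lower bound of Proposition~\ref{Proposition:Properties} --- yields
$$\frac{\mmse(Y|Z_{\gamma^*})}{\var(Y)}\ \ge\ 2^{-2D(Y)}\,2^{-2I(Y;Z_{\gamma^*})}.$$
This is the only step that uses that $Y$ has a density, and it is where the non-Gaussianness factor is produced.

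It then remains to express $I(Y;Z_{\gamma^*})$ through $g$ at a privacy level of order $\eps$. By definition $g_\delta(X_\sG,Y)=I(Y;Z_{\gamma_\delta})$, where $\gamma_\delta$ is the unique SNR with $I(X;Z_{\gamma_\delta})=\delta$; taking $\delta=I(X;Z_{\gamma^*})$ and using the strict monotonicity of $\gamma\mapsto I(X;Z_\gamma)$ forces $\gamma_\delta=\gamma^*$, so $I(Y;Z_{\gamma^*})=g_{I(X;Z_{\gamma^*})}(X_\sG,Y)$ identically. The crux is to show $I(X;Z_{\gamma^*})\le\eps+o(\eps)$ as $\eps\downarrow0$. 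Since $\gamma^*\downarrow0$, I would expand both privacy measures to first order in $\gamma$. Evaluating \eqref{I_MMSE2} at $\gamma=0$, where $\mmse(Y|Z_0)-\mmse(Y|Z_0,X)=\var(Y)-\E[\var(Y|X)]=\var(Y)\eta_X^2(Y)$, gives $I(X;Z_\gamma)=\tfrac{\gamma}{2}\var(Y)\eta_X^2(Y)+o(\gamma)$. On the other hand $\rho^2(X,Z_\gamma)\le\rho_m^2(X,Z_\gamma)\le\chi^2\!\left(P_{XZ_\gamma}\,\|\,P_XP_{Z_\gamma}\right)$, and for the additive model $Y=aX+M$ with $X$ Gaussian both outer quantities share the leading term $\gamma\,\var(Y)\eta_X^2(Y)$, so the squeeze yields $\rho_m^2(X,Z_\gamma)=\gamma\,\var(Y)\eta_X^2(Y)+o(\gamma)$. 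Consequently $I(X;Z_{\gamma^*})=\tfrac12\,\rho_m^2(X,Z_{\gamma^*})+o(\eps)=\tfrac{\eps}{2}+o(\eps)\le\eps+o(\eps)$, and since $g$ is increasing, $g_{I(X;Z_{\gamma^*})}(X_\sG,Y)\le g_{\eps+o(\eps)}(X_\sG,Y)$. Substituting into the displayed MMSE bound gives the stated inequality.

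The main obstacle is the first-order squeeze for $\rho_m^2(X,Z_\gamma)$: one must show that the weak dependence induced by the additive filter is asymptotically rank-one, so that the intractable maximal correlation is trapped between $\rho^2$ and $\chi^2$ with matching slopes. This is exactly where the Gaussianity of $X$ and the additive structure $Y=aX+M$ enter, through a Hermite expansion of the likelihood ratio $p_{Z_\gamma|X}/p_{Z_\gamma}$ about $\gamma=0$ whose first nonconstant mode is linear; the higher modes contribute only at order $\gamma^2$, forcing $\chi^2$ and $\rho^2$ to agree to first order. One caveat worth recording is that the entropy step produces the factor $2^{-2D(Y)}$, matching the convention of Proposition~\ref{Proposition:Properties}, so the exponent on $D(Y)$ in the statement should be read accordingly.
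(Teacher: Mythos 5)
Your overall architecture coincides with the paper's: first the maximum-entropy/Jensen step giving $\mmse(Y|Z_\gamma)/\var(Y)\geq 2^{-2D(Y)}2^{-2I(Y;Z_\gamma)}$ (and your closing caveat is correct --- the paper's own display \eqref{MMSE_Info2_1_Privacy} produces the exponent $-2D(Y)$, so the factor $2^{-D(Y)}$ in \eqref{MMSE_Info2_Privacy} and in the theorem statement overstates what the entropy argument delivers), and second the reduction of the theorem to the implication ``$\rho_m^2(X_\sG,Z_\gamma)\leq\eps$ forces $I(X_\sG;Z_\gamma)\leq\eps+o(\eps)$.'' Where you diverge is in how that implication is established, and there your argument has a genuine gap. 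You rest the entire step on the two-sided expansion $\rho_m^2(X,Z_\gamma)=\gamma\,\var(Y)\eta_X^2(Y)+o(\gamma)$, obtained by squeezing $\rho_m^2$ between $\rho^2$ and $\chi^2\left(P_{XZ_\gamma}\,\|\,P_XP_{Z_\gamma}\right)$. The upper half of that squeeze is not available under the stated hypotheses: $M$ is only assumed to have a density (with finite variance by the section-wide convention), so the likelihood ratio $p_{Z_\gamma|X}/p_{Z_\gamma}$ need not be square-integrable --- the $\chi^2$-divergence of a Gaussian location mixture is finite only under far stronger tail and moment conditions on $Y$, and when it is infinite your bound is vacuous. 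Even when it is finite, the assertion that all higher Hermite modes contribute only $O(\gamma^2)$ is precisely the hard content, and you state it rather than prove it. The paper sidesteps this entirely: it uses the decomposition $I(X_\sG;Z_\gamma)=I(X_\sG;\sqrt{\gamma}Y_\sG+N_\sG)+D(Z_\gamma|X_\sG)-D(Z_\gamma)$ of Appendix~\ref{Appendix A}, disposes of $D(Z_\gamma)=o(\gamma)$ via Guo et al., controls $D(Z_\gamma|X_\sG)$ through Prelov's entropy expansion (Lemma~\ref{Lemma_Vanishing_Conditional_NONG}, where the additive structure enters via $\mmse(Y|X_\sG)=\mmse(Y_\sG|X_\sG)$), and needs only the easy one-sided inequality $\rho_m^2(X_\sG,Z_\gamma)\geq\rho^2(X_\sG,Z_\gamma)=\rho_m^2(X_\sG,\sqrt{\gamma}Y_\sG+N_\sG)$.

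It is worth noting that your route can be repaired without the $\chi^2$ bound, because you only ever need an \emph{upper} bound on $I(X;Z_{\gamma^*})$. The one-sided inequality $\rho_m^2(X,Z_\gamma)\geq\rho^2(X,Z_\gamma)=\gamma\var(Y)\rho^2(X,Y)/(1+\gamma\var(Y))$ already gives $\gamma^*\leq\eps/\bigl(\var(Y)(\rho^2(X,Y)-\eps)\bigr)$, and the integral form of \eqref{I_MMSE2}, together with $\mmse(Y|Z_t)\leq\var(Y)$ and $\mmse(Y|Z_t,X)\to\E[\var(Y|X)]$ as $t\downarrow0$, yields $I(X;Z_{\gamma^*})\leq\frac{\gamma^*}{2}\var(Y)\eta_X^2(Y)+o(\gamma^*)$; since $\E[Y|X]$ is linear for $Y=aX+M$ one has $\eta_X^2(Y)=\rho^2(X,Y)$, whence $I(X;Z_{\gamma^*})\leq\frac{\eps}{2}+o(\eps)$. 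That version is correct, is arguably more elementary than the paper's Prelov-based lemma, and shows Gaussianity of $X$ is not needed for this step. As written, however, your central analytic claim --- the first-order agreement of $\rho^2$, $\rho_m^2$ and $\chi^2$ --- is unsupported and does not follow from the theorem's hypotheses.
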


\begin{proof}

We start by deriving an inequality relating $\mmse(Y|Z_\gamma)$ and $I(Y; Z_\gamma)$ which originates from the Shannon lower bound for the rate-distortion function. Since the Gaussian distribution maximizes the differential entropy \cite[Theorem 8.6.5]{Cover_Book}, we have $h(Y|Z=z)\leq \frac{1}{2}\log(2\pi e\var(Y|Z=z))$ for any random variable $Z$. It immediately follows from Jensen's inequality that $$h(Y|Z_\gamma)\leq \frac{1}{2}\log(2\pi e\mmse(Y|Z_\gamma)),$$ and hence
\begin{equation}\label{MMSE_Info2_1_Privacy}
\mmse(Y|Z_\gamma)\geq  \frac{1}{2\pi e}2^{2h(Y|Z_\gamma)}=\var(Y)2^{2(h(Y)-h(Y_\sG))}2^{-2I(Y;Z_\gamma)},
\end{equation}
from which we obtain
\begin{equation}\label{MMSE_Info2_Privacy}
  \inf_{\begin{smallmatrix}\gamma\geq0,\\I(X;Z_\gamma)\leq\eps\end{smallmatrix}}\frac{\mmse(Y|Z_\gamma)}{\var(Y)}\geq 2^{-D(Y)}2^{-2g_{\eps}(X,Y)},
\end{equation}
 where $D(Y)$ is the non-Gaussianness of $Y$ defined in Proposition~\ref{Proposition:Properties}. We note that a similar inequality is proved in \cite[Lemma 13]{Asoode_submitted} for arbitrary noise distribution provided that $Y$ is Gaussian.
% \begin{equation}\label{Non_Gaussianess}
%   D(Y):=D(P_Y||P_{Y_\sG})=h(Y_\sG)-h(Y)
% \end{equation}
 Although, inequality \eqref{MMSE_Info2_Privacy} provides an operational interpretation of $g_{\eps}(X,Y)$, it does not relate  $g_\eps(X,Y)$ to $\sM_\eps(X,Y)$. Such a relationship would follow if
 $\rho_m^2(X,Z_\gamma)\leq \eps$ implied $I(X; Z_\gamma)\leq \eps$ for a given $(X,Y)$, because then according to \eqref{MMSE_Info2_Privacy}, one could conclude that $\sM_\eps\geq 2^{-D(Y)}2^{-2g_\eps(X,Y)}$.
 However, this implication does not hold in general. Nevertheless, we show in the sequel that this implication holds for Gaussian $X$ in the almost perfect privacy regime when $P_{Y|X}$ is an additive noise channel. First we notice that for jointly Gaussian $X_\sG$ and $Y_\sG$, we have  $I(X_\sG; \sqrt{\gamma}Y_\sG+N_\sG)=-\frac{1}{2}\log(1-\rho^2(X_\sG, \sqrt{\gamma}Y_\sG+N_\sG))$. Hence, since $\rho_m^2(X_\sG,\sqrt{\gamma}Y_\sG+N_\sG)=\rho^2(X_\sG, \sqrt{\gamma}Y_\sG+N_\sG)$, the above implication clearly holds, i.e., $\rho_m^2(X_\sG,\sqrt{\gamma}Y_\sG+N_\sG)\leq \eps$ implies $I(X_\sG; \sqrt{\gamma}Y_\sG+N_\sG)\leq \eps$. On the other hand, specializing the decomposition~\eqref{decompositon_MI} proved in Appendix~\ref{Appendix A} for $U=X_\sG$ and $V=Z_\gamma$, we can write \begin{equation}\label{Non_gaussianiness1}
   I(X_\sG; Z_\gamma)=I(X_{\sG}; \sqrt{\gamma}Y_{\sG}+N_{\sG})+D(Z_{\gamma}|X_\sG)-D(Z_{\gamma}),
 \end{equation}
where $D(V|U)$ for a pair of absolutely continuous random variables $(U,V)$ is defined as
\begin{equation}\label{Non-Gaussianness_Conditional}
  D(V|U):=D(P_{V|U}||P_{V_\sG |U_\sG}|P_U)=\E_{UV}\left[\log\frac{P_{V|U}}{P_{V_\sG |U_\sG}}\right],
\end{equation}
where $(U_\sG, V_\sG)$ is a pair of Gaussian random variables having the same means, variances and correlation coefficient as $(U, V)$, and $P_{V_\sG|U_\sG}(\cdot|u)$ and $P_{V|U}(\cdot|u)$ are the conditional densities of $V_\sG$ and $V$ given $U_\sG=u$ and $U=u$, respectively.
 As shown in \cite[Appendix II]{MMSE_Guo}  if $\var(Y)\fin$, then as $\gamma\to 0$
 \begin{equation}\label{Non-Gaussiness_Small_SNR}
   D(Z_\gamma)=o(\gamma).
 \end{equation}
  Lemma~\ref{Lemma_Vanishing_Conditional_NONG} in Appendix~\ref{Appendix_Proof_Lemma}
  shows that $D(Z_\gamma|X_\sG)$ also behaves like $o(\gamma)$ if $\mmse(Y|X_\sG)=\mmse(Y_\sG|X_\sG)$.
  In light of this lemma, \eqref{Non_gaussianiness1}, and \eqref{Non-Gaussiness_Small_SNR}, we can conclude that
$$I(X_\sG; Z_\gamma)\leq I(X_\sG; \sqrt{\gamma}Y_\sG+N_\sG)+\frac{\gamma}{2}\left[\mmse(Y_\sG|X_\sG)-\mmse(Y|X_\sG)\right]+o(\gamma).$$
 Thus if $P_{XY}$ satisfies $\mmse(Y|X_\sG)=\mmse(Y_\sG|X_\sG)$, or equivalently $\E[\var(Y|X_\sG)]=1-\rho^2(X,Y)$, we have
 \begin{equation}\label{Gaussian_ENSR_G11}
   I(X_\sG; Z_\gamma)\leq I(X_\sG; \sqrt{\gamma}Y_\sG+N_\sG)+o(\gamma).
 \end{equation}
Since $\rho_m^2(X_\sG, Z_\gamma)\geq \rho_m^2(X_\sG, \sqrt{\gamma}Y_\sG+N_\sG)$, we can conclude from \eqref{Gaussian_ENSR_G11} that,  $\rho_m^2(X_\sG, Z_\gamma)\leq \eps$ implies $I(X_\sG; Z_\gamma)\leq \eps+o(\gamma)$ for sufficiently small $\gamma$ (or equivalently $\eps$). Note that it is straightforward to show that $\rho_m^2(X_\sG, Z_\gamma)\leq \eps$ implies $\gamma\leq \frac{\eps}{\rho^2(X_\sG, Y)-\eps}$ (see Example~\ref{Example2}). Hence, in the almost perfect privacy regime, $\rho_m^2(X_\sG, Z_\gamma)\leq \eps$ is satisfied with $\gamma$ which is at most linear in $\eps$. Therefore, \eqref{Gaussian_ENSR_G11} allows us to conclude that $\rho_m^2(X_\sG, Z_\gamma)\leq \eps$ implies that $I(X_\sG; Z_\gamma)\leq \eps+o(\eps)$.

The condition $\E[\var(Y|X_\sG)]=1-\rho^2(X,Y)$ is satisfied if the channel from $X_\sG$ to $Y$ is additive, that is, $Y=aX_\sG+M$, where $a\in \R^+$ and $M$ is a noise random variable with a density having variance $1-\rho^2(X_\sG,Y)$. However, since $\E[\var(Y|X_\sG)]=\E[\var(Y|rX_\sG)]$ for any  $r\neq 0$, the variance condition can be removed.\qed
\end{proof}

The lower-bound \eqref{MMSE_Info_privacy_Final} can be further simplified by invoking Corollary~\ref{corollary_Approximation_g_eps}, which results in
$$\sM_\eps(X_\sG,Y)\geq  2^{-D(Y)}\left(1-\frac{2\eps}{ \eta_{X_\sG}^2(Y)}\right)+o(\eps).$$

%As proved in Proposition~\ref{Proposition:Properties}
%$$\frac{1}{2}\log(1+N(Y)\gamma_\eps)\leq g_\eps(X;Y)\leq \frac{1}{2}\log(1+\gamma_\eps\var(Y)),$$
%and hence we have from \eqref{MMSE_Info_privacy_Final} that for almost perfect privacy region
%$$\sM_\eps(X_\sG;Y)\geq \frac{2^{-D(Y)}}{1+\gamma_\eps\var(Y)},$$
%and since for small $\eps$, $\gamma_\eps=\frac{2\eps}{\var(\E[Y|X])}+o(\eps),$
%we come to the conclusion that for
%small $\eps$
%$$\sM_\eps(X_\sG;Y)\geq \frac{2^{-D(Y)}}{1+\frac{2\eps}{ \eta^2_{X_\sG}(Y)}}\approx 2^{-D(Y)}\left(1-\frac{2\eps}{ \eta_{X_\sG}^2(Y)}\right).$$

One the other hand, as proved in \cite{Asoode_MMSE_submitted}, when $Y$ is Gaussian, $Y_\sG$, then $$1-\frac{\eps}{\rho^2(X, Y_\sG)}\leq \sM_\eps(X, Y_\sG)\leq 1-\frac{\eps}{\rho^2_m(X,Y_\sG)},$$ for any $\eps\leq \rho^2_m(X,Y)$. We have therefore tight lower bounds for $\sM_\eps(X,Y)$ when either $X$ or $Y$ is Gaussian.

\section{Conclusion}
In this paper, we studied the problem of approximating the maximal amount of information  one can transmit about a random variable $Y$ over an additive Gaussian channel without revealing more than a certain (small) amount of information about another random variable $X$ that represents sensitive or private data. Specifically, letting $g_\eps(X,Y)$ denote the maximum of  $I(Y;Z_\gamma)$ over $\gamma\geq 0$, where $Z_\gamma:=\sqrt{\gamma}Y+N_\sG$ and $N_\sG\sim \N(0,1)$ is independent of $(X,Y)$, subject to $I(X; Z_\gamma)\leq \eps$, we showed that $g_\eps(X,Y)=\frac{\eps}{\eta_X^2(Y)}+\Delta(X,Y)\eps^2+o(\eps)$ where $\eta_X^2(Y)$ and $\Delta(X,Y)$ are two asymmetric measures of correlation between $X$ and $Y$. For the special case of jointly Gaussian $X$ and $Y$, the approximation was compared with  the exact value of $g_\eps(X,Y)$. As a side result, we also showed that this approximation leads to a slightly improved version of the strong data processing inequality under some suitable conditions on $P_{Y|X}$.

We also studied an estimation-theoretic formulation of the privacy-utility tradeoff for the same setup. Let $\sM_\eps(X,Y)$ be the smallest achievable MMSE in estimating $Y$ given $Z_\gamma$ such that MMSE in estimating any function $f$ of $X$ given $Z_\gamma$ is lower bound by $(1-\eps)\var(f(X))$.  We then showed that when $X$ is Gaussian and $Y$ is the output of an additive noise channel then $\sM_\eps(X,Y)\geq 2^{-D(Y)}2^{-2g_\eps(X,Y)}$ for sufficiently small $\eps$, where $D(Y)$ is the non-Gaussianness of $Y$. The significance of this bound is that it gives an operational interpretation for $g_\eps(X,Y)$ in terms of MMSE. Using the approximation obtained for $g_\eps(X,Y)$, we derived a lower bound for $\sM_\eps(X,Y)$ for small $\eps$ which is linear in $\eps$.

\bibliographystyle{IEEEtran}
\bibliography{bibliography}

\appendix
\section{Connection Between Mutual Information and Non-Gaussianness} \label{Appendix A}
For any pair of random variables $(U, V)$ with $I(U; V)\fin$, let $P_{V|U}(\cdot|u)$ be the conditional density of $V$ given $U=u$. Then, we have
\begin{eqnarray}
% \nonumber % Remove numbering (before each equation)
  I(U; V) &=& \E_{UV}\left[\log\frac{P_{V|U}(V|U)}{P_V(V)}\right]\nonumber\\
  &=&\E_{UV}\left[\log\frac{P_{V|U}(V|U)}{P_{V_\sG|U_\sG}(V|U)}\right]+\E_{UV}\left[\log\frac{P_{V_\sG|U_\sG}(V|U)}{P_{V_\sG}(V)}\right]-\E_{UV}\left[\log\frac{P_{V}(V)}{P_{V_\sG}(V)}\right] \nonumber\\
   &=& I(U_{\sG}; V_\sG) +D(V|U)-D(V),\label{decompositon_MI}
\end{eqnarray}
where $(U_\sG, V_\sG)$ is a pair of Gaussian random variable having the same means, variances and correlation coefficient as $(U,V)$, and $P_{V_\sG|U_\sG}(\cdot|u)$ is the conditional density of $V_\sG$ given $U_\sG=u$, and the quantity $D(V|U)$  is defined in \eqref{Non-Gaussianness_Conditional}.
Replacing $U$ and $V$ with $X$ and $Z_\gamma$, respectively, the decomposition \eqref{decompositon_MI} allows us to conclude that
\begin{equation*}
  I(X; Z_\gamma)=I(X_{\sG}; \sqrt{\gamma}Y_{\sG}+N_{\sG})+D(Z_{\gamma}|X)-D(Z_{\gamma}),
\end{equation*}
and therefore, if $Y=Y_\sG$ is Gaussian, we have
$$I(X; Z_\gamma)=I(X_{\sG}; Z_\gamma)+D(Z_\gamma|X)\geq I(X_{\sG}; Z_\gamma),$$
from which we conclude that when $Y$ is Gaussian then $I(X; Z_\gamma)\leq \eps$ implies that $I(X_\sG; Z_\gamma)\leq \eps$ and hence $g_\eps(X, Y_\sG)\leq g_\eps(X_\sG, Y_\sG)$.

\section{Completion of the Proof of Theorem~\ref{theorem_G_ENSR_ALMOST}}\label{Appendix_Proof_Lemma}
\begin{lemma}\label{Lemma_Vanishing_Conditional_NONG}
    For Gaussian $X_\sG$ and absolutely continuous $Y$ with unit variance, we have
    $$D(Z_\gamma|X_\sG)\leq\frac{\gamma}{2}\left[\mmse(Y_\sG|X_\sG)-\mmse(Y|X_\sG)\right]+o(\gamma).$$
  \end{lemma}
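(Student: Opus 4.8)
The plan is to convert the conditional non-Gaussianness $D(Z_\gamma|X_\sG)$ into a difference of mutual informations and then expand that difference to first order in $\gamma$. Specializing the decomposition \eqref{decompositon_MI} to $U=X_\sG$ and $V=Z_\gamma$ (this is exactly \eqref{Non_gaussianiness1}), and noting that since $X_\sG$ is already Gaussian its Gaussian surrogate is itself, while the Gaussian surrogate of $Z_\gamma$ has the law of $\sqrt{\gamma}Y_\sG+N_\sG$ (same mean, variance $\gamma+1$, and correlation $\rho(X_\sG,Y)$ with $X_\sG$), I would rewrite the identity as
$$D(Z_\gamma|X_\sG)=I(X_\sG;Z_\gamma)-I(X_\sG;\sqrt{\gamma}Y_\sG+N_\sG)+D(Z_\gamma).$$
Since $\var(Y)<\infty$, the estimate \eqref{Non-Gaussiness_Small_SNR} supplies $D(Z_\gamma)=o(\gamma)$, so this last term is negligible at first order; everything reduces to the leading behavior of the mutual-information difference.

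For that difference I would invoke the integral form of the I-MMSE identity \eqref{I_MMSE2}, writing $Z_t^\sG:=\sqrt{t}Y_\sG+N_\sG$ and
$$I(X_\sG;Z_\gamma)=\frac{1}{2}\int_0^\gamma\bigl[\mmse(Y|Z_t)-\mmse(Y|Z_t,X_\sG)\bigr]\,\text{d}t,$$
together with the analogous expression for $I(X_\sG;\sqrt{\gamma}Y_\sG+N_\sG)$ obtained by replacing $Y$ by $Y_\sG$ and $Z_t$ by $Z_t^\sG$. At $t=0$ the observation $Z_0=N_\sG$ is independent of $(X_\sG,Y)$, so the two integrands take the values $\var(Y)-\mmse(Y|X_\sG)=1-\mmse(Y|X_\sG)$ and $\var(Y_\sG)-\mmse(Y_\sG|X_\sG)=1-\mmse(Y_\sG|X_\sG)$, respectively. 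Provided these integrands are right-continuous at $t=0$, dividing by $\gamma$ and letting $\gamma\downarrow 0$ gives
$$I(X_\sG;Z_\gamma)-I(X_\sG;\sqrt{\gamma}Y_\sG+N_\sG)=\frac{\gamma}{2}\bigl[\mmse(Y_\sG|X_\sG)-\mmse(Y|X_\sG)\bigr]+o(\gamma),$$
and substituting into the identity above yields the lemma, in fact with equality up to $o(\gamma)$; the stated inequality then follows a fortiori.

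The main obstacle is the right-continuity at $t=0$, and specifically the convergence $\mmse(Y|Z_t,X_\sG)\to\mmse(Y|X_\sG)$; the unconditional term is easier, since $\mmse(Y|Z_t)\le\var(Y)$ always and converges to $\var(Y)$ by the continuity of $\gamma\mapsto\mmse(Y|Z_\gamma)$ established in \cite{MMSE_Guo_Wu}. The conditional statement is the genuine content: it says that a vanishing-SNR Gaussian observation carries no information in the limit even in the presence of the side information $X_\sG$. I would establish it by applying the same continuity result of \cite{MMSE_Guo_Wu} to the family of additive-Gaussian channels indexed by the side information $X_\sG=x$, obtaining $\mmse(Y|Z_t,X_\sG=x)\to\var(Y|X_\sG=x)$ for $P_{X_\sG}$-almost every $x$, and then integrating against $P_{X_\sG}$; the finite second moment of $Y$ furnishes the integrable dominating bound $\mmse(Y|Z_t,X_\sG=x)\le\var(Y|X_\sG=x)$ needed to pass the limit through the expectation by dominated convergence. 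Once this continuity is secured, the first-order expansion and hence the claimed bound follow at once.
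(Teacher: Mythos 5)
Your proof is correct, but it takes a genuinely different route from the paper's. The paper works directly with the conditional differential entropies: it writes $D(Z_\gamma|X_\sG=x)=h(\sqrt{\gamma}Y_\sG+N_\sG|X_\sG=x)-h(Z_\gamma|X_\sG=x)$, computes the Gaussian term exactly, and lower-bounds $h(Z_\gamma|X_\sG=x)$ by conditioning on a truncation event $E=1_{\{|Y|\leq L\}}$ so that Prelov's low-SNR entropy expansion $h(\sqrt{\gamma}Y+N_\sG)=\frac{1}{2}\log(2\pi e)+\frac{\var(Y)}{2}(\gamma+o(\gamma))$ (which needs a finite $(2+\alpha)$-moment) applies to $Y|\{E=1\}$; the truncation level is then sent to infinity. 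You instead run the decomposition \eqref{Non_gaussianiness1} backwards, writing $D(Z_\gamma|X_\sG)=I(X_\sG;Z_\gamma)-I(X_\sG;\sqrt{\gamma}Y_\sG+N_\sG)+D(Z_\gamma)$, discard $D(Z_\gamma)=o(\gamma)$ via \eqref{Non-Gaussiness_Small_SNR}, and extract the first-order term of each mutual information from the integral form of \eqref{I_MMSE2}. Your version buys several things: it needs only $\var(Y)\fin$ rather than Prelov's moment condition (so no truncation), it yields the estimate with equality up to $o(\gamma)$ rather than as a one-sided bound, and it produces the downstream inequality \eqref{Gaussian_ENSR_G11} directly, making the non-Gaussianness bookkeeping almost superfluous. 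What the paper's argument buys in exchange is a pointwise bound on $D(Z_\gamma|X_\sG=x)$ for each $x$, not only the $P_{X_\sG}$-average.

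One point in your argument deserves more care than your citation suggests: the crux is the right-continuity at $t=0$ of $t\mapsto\mmse(Y|Z_t,X_\sG)$, i.e., $\mmse(Y|Z_t,X_\sG)\to\mmse(Y|X_\sG)$. The result of \cite{MMSE_Guo_Wu} that this paper invokes (their Proposition 7) concerns differentiability under an all-moments hypothesis, which is not what you want; what you need is the weaker statement that $\mmse(V|\sqrt{t}V+N_\sG)\to\var(V)$ as $t\downarrow 0$ for any finite-variance $V$, applied to $V\sim P_{Y|X_\sG=x}$ and then integrated by dominated convergence exactly as you describe. That statement is true and elementary (e.g., $\var(\E[V|Z_t])=\int(\E[V\phi(z-\sqrt{t}V)])^2/\E[\phi(z-\sqrt{t}V)]\,\text{d}z\to 0$ by Cauchy--Schwarz and a generalized dominated convergence argument), but it should be proved or cited precisely rather than attributed to the differentiability result. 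With that point nailed down, your proof is complete.
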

\begin{proof}
Let $E$ be an auxiliary random variable defined as $$
E=
\begin{cases}
1,~~~|Y|\leq L\\
0,~~~\text{otherwise},
\end{cases}
$$
for some real number $M>0$. Note that
\begin{eqnarray}
% \nonumber to remove numbering (before each equation)
  \hspace{-0.5cm}D(Z_\gamma|X_\sG=x) &=&h(\sqrt{\gamma}Y_\sG+N_\sG|X_\sG=x)-h(Z_\gamma|X_\sG=x) \nonumber \\
  &\leq& h(\sqrt{\gamma}Y_\sG+N_\sG|X_\sG=x)-h(Z_\gamma|X_\sG=x, E)\nonumber\\
   &=& \frac{1}{2}\log(2\pi e(1+\gamma\var(Y_\sG|X_\sG=x)))\nonumber\\
   &&-\Pr(E=1)h(Z_\gamma|X_\sG=x, E=1)-\Pr(E=0)h(Z_\gamma|X_\sG=x, E=0) \nonumber\\
   &\stackrel{(a)}{\leq}&  \frac{1}{2}\log(2\pi e(1+\gamma\var(Y_\sG|X_\sG=x))-\Pr(E=0)h(N_\sG)\nonumber\\
   &&-\Pr(E=1)h(Z_\gamma|X_\sG=x, E=1)\label{Proof_Lemma_NonGaussianess}
\end{eqnarray}
where $(a)$ follows from the fact that $h(Z_\gamma|X_\sG=x, E=0)\geq h(N_\sG)$.

Prelov \cite{Prelov_Almost_Gaussian} showed that for any random variable $Y$ such that
\begin{equation}\label{Pinsker_Condition}
  \E[|Y|^{2+\alpha}]\leq K\fin,
\end{equation} for some $\alpha>0$, then
\begin{equation}\label{Pinsker}
  h(\sqrt{\gamma}Y+N_\sG)=\frac{1}{2}\log(2\pi e)+\frac{\var(Y)}{2}(\gamma+o(\gamma)),
\end{equation}
where $o(\gamma)$ term depends only on $K$.
Since $Y|\{E=1\}$ satisfies \eqref{Pinsker_Condition}, we can use \eqref{Pinsker} to evaluate $h(Z_\gamma|X_\sG=x, E=1)$ in \eqref{Proof_Lemma_NonGaussianess} which yields
\begin{eqnarray}
D(Z_\gamma|X_\sG=x)&\leq& \frac{1}{2}\log(2\pi e(1+\gamma\var(Y_\sG|X_\sG=x))-\Pr(E=0)\frac{1}{2}\log(2\pi e)\nonumber\\
&&-\Pr(E=1)\left[\frac{1}{2}\log(2\pi e)+\frac{\var(Y|X_\sG=x, E=1)}{2}(\gamma+o(\gamma))\right]\nonumber\\
&=&\frac{1}{2}\log(1+\gamma\var(Y_\sG|X_\sG=x))\nonumber\\
&&-\frac{\var(Y|X_\sG=x, E=1)}{2}(\gamma+o(\gamma))\Pr(E=1).  \label{Pinsker_Outcome}
\end{eqnarray}
Note that since $\var(Y)\fin$ and $X_\sG$ has a positive density, $\var(Y|X_\sG=x)\fin$ for almost all $x$ (except for $x$ in a set of zero Lebesgue measure). Hence, we can choose $L$ sufficiently large such that for any given $\delta>0$,
$$\Pr(E=1)\geq1-\delta,$$ and
$$\var(Y|X_\sG=x, E=1)\geq \var(Y|X_\sG=x)-\delta.$$ Therefore, invoking the inequality $\log(1+u)\leq u$ for $u>0$, we can write
$$D(Z_\gamma|X_\sG=x)\leq \frac{\gamma}{2}\left[\var(Y_\sG|X_\sG=x)-(\var(Y|X_\sG=x)-\delta)(1-\delta)\right]+o(\gamma),$$
from which and the fact the $\delta$ is arbitrarily small the result follows. \qed
\end{proof}
 %\section{Privacy Aggregation}
% Given private data $X$ and observable data $Y$, suppose Alice observes $Y$ and Bob keeps asking her about $X$. Alice is required to respond to each of his query according to the following privacy filter:
% $$Z_u=\sqrt{\gamma} Y+\mathsf{W}_u,$$
% where $\mathsf{W}_u\sim \N(0, u)$ and $u\in [0, N]$. The idea is this: the more Bob ask about private information, the noisier version he gets.
%
% As clear from the channel, this model involves $Z^N$ which is not i.i.d. and hence different from our study so far. We would like to know the dependency of $g_\eps(X;Y)$ and/or $\sM_\eps(X;Y)$ on $N$. As $N\to \infty$, it is clear that Bob can not recover $Y$, so the asymptotic case is also interesting.

\end{document}